\newtheorem{lemma}{Lemma}[section]
\newtheorem{property}{Property}[section]
\newtheorem{definition}{Definition}[section]
\newcommand{\kmeans}{\textsc{$k$-means}}
\newcommand{\weight}{w}
\newcommand{\R}{\mathbb{R}}
\newcommand{\calS}{\mathcal{S}}
\newcommand{\calO}{\mathcal{O}}
\newcommand{\eps}{\varepsilon}
\newcommand{\coreset}{C}
\newcommand{\kmpp}{\textsc{$k$-means\texttt{++}}}
\newcommand{\cost}{\textsc{cost}}
\DeclareMathOperator*{\dist}{dist}
\DeclareMathOperator*{\polylog}{polylog}
\newcommand{\erclogowrapped}[1]{%
\setlength\intextsep{0pt}%
\begin{wrapfigure}[3]{r}{#1*\real{1.1}}%
\includegraphics[width=#1]{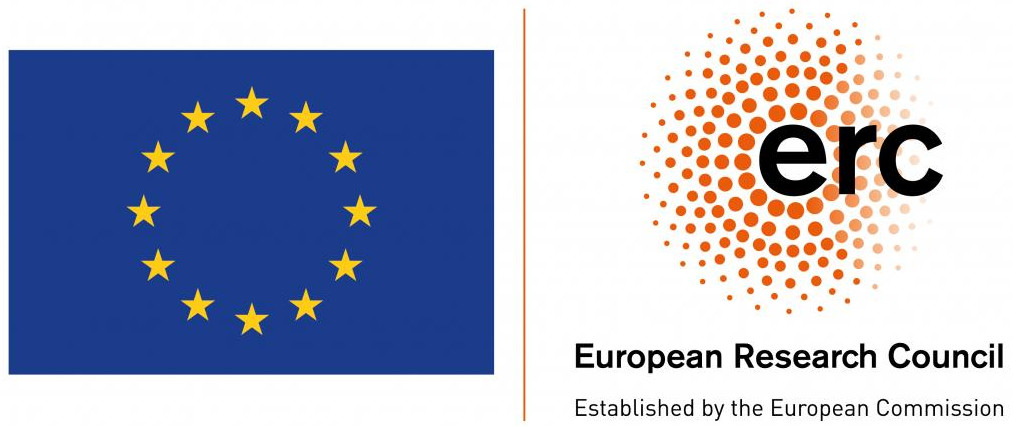}%
\end{wrapfigure}%
}
\title{Experimental Evaluation of Fully Dynamic $k$-Means via Coresets}
\author{Monika Henzinger \thanks{Institute of Science and Technology Austria (ISTA), Klosterneuburg, Austria}
\and David Saulpic\footnotemark[1]
\and Leonhard Sidl \thanks{University of Vienna, Austria. Work done while at ISTA.}}
\date{}
\begin{document}

\maketitle

\begin{abstract}
    For a set of points in $\R^d$, the Euclidean $k$-means problems consists of finding $k$ centers such that the sum of distances squared from each data point to its closest center is minimized. 
    Coresets are one the main tools developed recently to solve this problem in a big data context. They allow to compress the initial dataset while preserving its structure: running any algorithm on the coreset provides a guarantee almost equivalent to running it on the full data.
    In this work, we study coresets in a fully-dynamic setting: points are added and deleted with the goal to efficiently maintain a coreset with which a \kmeans~solution can be computed. 
    Based on an algorithm from Henzinger and Kale [ESA'20], we present an efficient and practical implementation of a fully dynamic coreset algorithm, that improves the running time by up to a factor of 20 compared to our non-optimized implementation of the algorithm by Henzinger and Kale, without sacrificing more than 7\% on the quality of the \kmeans~solution.
    
\end{abstract}

\section{Introduction}

One of the most fundamental tools of data analysis is clustering, where groups of points that are similar, or "close" in some metric space are identified. 
This task can be formalised using the $k$-means problem: find $k$ centers, such that the sum of distances squared from each point to its closest center is minimized. For a given set of centers, this sum is called the \emph{cost} of the solution. A solution with low cost can be used to represent the original data, with only a small loss in precision, by replacing each point with its closest center. A solution also results in a valid clustering of the instance, where a cluster is defined as the set of all points assigned to the same center.

The \kmeans~problem has numerous applications, especially for instances with a very large number of input points \cite{ikotun2022k, bahmani2012scalable}. Various algorithms to solve the problem with different emphasis -- running in near-linear time, as \cite{Cohen-AddadLNSS20}, or using minimal few memory, as \cite{BravermanFLSY17} -- have been described. 

The recently most studied strategy to find a \kmeans~solution on big data problems is to use \emph{coresets}:  for a parameter $\eps > 0$, a set \coreset~is an $\eps$-coreset of the input $P$ if evaluating the cost of any candidate solution on $C$ gives the same cost as evaluating it on the full input, up to a factor of $(1\pm \eps)$.

This line of work developed coresets with a size \emph{independent of the original dataset}, which can be computed equally fast as a solution to \kmeans \cite{Cohen-AddadLSSS22, huang23}.
Additionally, coresets are particularly useful when the available memory is restricted, such as in streaming or distributed settings. We add to the versatility of coresets by focusing on maintaining a coreset in the \emph{fully dynamic} model, where points can be inserted or deleted from the dataset at every time step.

Algorithms specifically designed for the dynamic setting are beneficial in any applications, e.g. social networks or web-search queries, where the dataset is evolving over time, and one wishes to maintain a good clustering. The naive alternative to using a dynamic algorithm is to run a static algorithm after each update; however this can become inefficient, and the solution may even be outdated as soon as it is computed.
To cope with this, Henzinger and Kale~\cite{henzinger2020} presented an algorithm that can efficiently maintain an $\eps$-coreset. More precisely, the complexity to deal with any update is $ O(k^2 d \polylog(n))$, and the $\eps$-coreset obtained has optimal size (essentially $O(k \eps^{-4})$).

Once a coreset can be maintained, it also becomes very efficient to update a solution to \kmeans: For instance, running the popular \kmpp~algorithm of \cite{ArthurV07} would result in a $O(\log k)$ approximation, with a complexity at every time step of $O(k^2 d \polylog(n))$. Alternatively, algorithms that provide a constant factor approximation with the same running time can be used (see e.g. \cite{mettu2004optimal, lattanzi2019better}).

\paragraph{Our Contribution.}
In this work, we explore the practical characteristics of the algorithm described in \cite{henzinger2020} for maintaining both a coreset and a $k$-means solution.\footnote{\url{https://git.ista.ac.at/lsidl/dynamiccoreset} }
We evaluate the practical relevance of several algorithmic details required for strict theoretical guarantees. Furthermore, we propose and test several, partly heuristic, optimizations that improve the theoretical running time of the dynamic algorithm further. After calibrating all parameters on small-scale datasets from \cite{Birchsets}, we test our optimized algorithm on large-scale real-world datasets. Even though our algorithm is in theory versatile enough to be used in any metric space, we focus on euclidean datasets, as they are most common.

We found that an implementation that closely follows the algorithm of \cite{henzinger2020} already vastly outperforms our baseline, despite its heavy details required for theoretical guarantees. 
Combined with our optimizations, this results in an algorithm running more than $550$ times faster on real-world dataset than recomputing from scratch -- which is the only baseline algorithm we are aware of.

For the incremental setting, where all operations are insertions, our optimizations lead to an algorithm provably maintaining a coreset with running time $O(kd \log(n))$ per insertion, where $n$ is the size of the dataset. In the opposite decremental setting, with only deletions, our algorithm also runs in time $O(kd \log n)$, but without a theoretical guarantee on its quality. 
Combining those two guarantees, our algorithm maintains a coreset under both insertion and deletion in amortized time $O(kd \polylog n)$, improving  \cite{henzinger2020} by a factor $k$.

\subsection{Related Work}

\paragraph{On Fully-Dynamic Clustering}
The study of the $k$-means problem dates back to Lloyd's work~\cite{lloyd1982least} on quantization of a continuous signal. Since then, a lot of theoretical research was dedicated to understanding the cost function. In summary, the problem is NP-hard, even when $d=2$ \cite{MahajanNV12} or $k=2$ \cite{dasgupta2009random}. However, it is possible to compute a $(1+\eps)$-approximation \cite{CohenAddadFS19} in low dimensional space, and a $5.91$-approximation in general spaces \cite{Cohen-AddadEMN22}. An orthogonal line of work, more focused on the practical aspect, attempts to find solutions that are as good as possible while minimizing running time: the most celebrated result -- which had a large influence on the practical success of \kmeans~--  is the \kmpp~algorithm of \cite{ArthurV07}, a very simple algorithm that computes a $O(\log k)$-approximation in time $O(nkd)$. 

The fully dynamic setting received lots of attention in recent years, with works of \cite{LattanziV17, BravermanFLSY17, FichtenbergerLN21} for $k$-means, but also \cite{Cohen-AddadHPSS19, BhattacharyaLP22} for the related facility location problem or \cite{ChanLSW22, BateniEFHJMW23} for $k$-center. 
However, 
only few of those works have been carefully implemented, and mostly for the $k$-center problem \cite{ChanGS18, ChanGHS22}.
For dynamic \emph{graph} algorithms there is, however, a large body of empirical work, see \cite{DBLP:journals/jea/HanauerHS22} for a survey.

\paragraph{On Coresets}
As mentioned above, the coreset paradigm is an efficient way to solve clustering in big data, as they allow to ``turn big data into tiny data" \cite{FeldmanSS20}, on which the clustering problem can be efficiently solved. Introduced by \cite{HaM04}, the research on \kmeans~coresets is booming (see e.g. \cite{FeldmanL11, HuangV20, Cohen-AddadSS21, Cohen-AddadLSS22} and references therein), which culminated in optimal coresets for $k$-means clustering of size $O(\eps^{-2} k \min(\eps^{-2}, \sqrt k))$ \cite{Cohen-AddadLSSS22, huang23}, that can be computed as fast as any constant-factor approximation algorithm. Using the \kmpp~algorithm, the running time is therefore $O(nkd)$. This progress is spreading to related problems as well, such as capacitated clustering \cite{BravermanCJKST022}, clustering of lines \cite{MaromF19} or clustering with missing values \cite{BravermanJKW21}. A more comprehensive introduction to the coreset literature can be found in \cite{feldmanSurvey}.

\subsection{Organisation of the Paper}

We first introduce the relevant definitions and standard algorithm. In \Cref{sec:dynAlgo}, we present the algorithm from \cite{henzinger2020} and the optimization we introduced. We present the result of our experiments in \Cref{sec:exp}.

\section{Preliminaries}
\subsection{Definitions}

We call a weighted set a set  $X \subset  \R^d$ with an associated weight function $\weight: X \rightarrow \R^+$.

A \emph{solution} to \kmeans~is any $k$-tuple of points in $\R^d$.
The (Euclidean) \kmeans~problem is defined as follows. Given a set of points $X$ with weights $\weight$ and an integer $k$, the goal is to find the solution $S^* \subset \R^d$ that minimizes $\cost_\weight(S, X) := \sum_{x \in X} \weight(x) \min_{s \in S} \dist^2(x, s)$. 
A solution $S$ is an $\alpha$-approximation if it satisfies $\cost_\weight(S, X) \leq \alpha \cdot \cost_\weight(S^*, X)$.
An $(\alpha, \beta)$-bicriteria approximation is a set of $\beta  k$ many centers, with cost at most $\alpha \cost_\weight(S^*, X)$.

\begin{definition}
For $\eps \geq 0$, an $\eps$-coreset for \kmeans~is a set $C \subseteq U$ with weights $\weight'$ such that, for any candidate solution $S$,
\[\cost_{\weight'}(S, C) = (1\pm \eps)\cost_\weight(S, X).\]    
\end{definition}

In the dynamic model, the input changes and the algorithm has to efficiently maintain a solution to \kmeans. 
More precisely, for any set $X \subset \R^d$ with an associated weight function $\weight : X \rightarrow \R^+$, the algorithm should support operations $Insert(x)$ that inserts point $x$ to the current dataset.\footnote{Our algorithm is actually more general and supports insertion of weighted points.}  Additionally, $Delete(x)$ removes $x$ from $X$. 
After each operation, the algorithm should return an $\eps$-coreset. We note that one can then run any (static) approximation algorithm to extract a good solution from the coreset.

\subsection{Static Coresets}\label{sec:static}
Schwiegelshohn and Sheikh-Omar \cite{schwiegelshohn2022} present a comprehensive description of coreset algorithms, with an experimental evaluation. 
As mentioned previously, $\eps$-coresets with optimal size  $O(\eps^{-2} k \min(\eps^{2}, k^{1/2}))$ can be constructed in time $\tilde{O}(|X|kd)$ \cite{Cohen-AddadLSSS22, huang23}.

The result of \cite{schwiegelshohn2022} is that the \emph{Sensitivity Sampling} algorithm from Feldman and Langberg \cite{Feldman2011} performs best. This algorithm has a theoretical running time of $\tilde{O}(|X|kd)$, and produces an $\eps$-coreset of size essentially $\frac{kd}{\eps^4}$ many points \cite{Feldman2011}.
One of the strength of this algorithm is its simplicity: to compute an $\eps$-coreset, it first computes an $(O(1), O(1))$-bicriteria approximation, and then samples points (essentially) proportionate to their cost in the bicriteria solution.  
The initial bicriteria approximation can be computed using the \kmpp~algorithm. This algorithm works in rounds, each round sampling one point proportionate to the squared distance to the points already sampled. This is guaranteed to be a $O(1)$-approximation after $2k$ iterations \cite{wei2016constant}. Following this, one iteration of \emph{Lloyd's Algorithm} is performed: each center is replaced by the optimal center for its cluster, which is the center of mass of the cluster. While this does not  formally improve the quality of approximation, is results in a significantly better solution in practice. 
Note that, \cite{Feldman2011} showed that, when (roughly) $\frac{kd}{\eps^4}$ points are sampled using sensitivity sampling, they form an $\eps$-coreset (with each point weighted by the inverse of its sampling probability). We give the pseudo-code of the algorithms in \Cref{ap:code}

\section{The Dynamic Algorithms}\label{sec:dynAlgo}

\subsection{Description of the base algorithm.} \label{par:dyn_decription}
A thorough description of the dynamic framework can be found in the original paper \cite{henzinger2020}. We provide here a detailed overview (without proofs) in order to explain our heuristic improvements.
The algorithm relies on two fundamental properties of coresets, that allow to compose coresets together:

\begin{property} \label{lemma:base}
   Let $C_1$ be an $\eps$-coreset for $X_1$, and  $C_2$ be an $\eps$-coresets for $X_2$. Then, $C_1 \cup C_2$ is an $\eps$-coreset of $X_1 \cup X_2$. 
\end{property}

\begin{property} \label{lemma:base2}
   Let $C_1$ be an $\eps$-coreset for $X$, and  $C_2$ be a $\delta$-coresets for $C_1$. 
   Then, $C_2$ is an $(\eps + \delta + \eps \delta)$-coreset of $X$.
\end{property}

The high-level idea of the algorithm is to use  a combination of those two properties. To compute an $\eps$-coreset of a set $X$, one can split $X$ into two halves $X_1, X_2$, and compute $\eps/2$-coresets $C_1, C_2$ for those. By Property~\ref{lemma:base}, the union of $C_1$ and $C_2$ is an $\eps/2$-coreset for $X$. To reduce the size further, let $C$ be an $\eps/2$-coreset of $C_1 \cup C_2$: Property~\ref{lemma:base2} ensures that $C$ is an $\eps$-coreset of $X$. 
The reason for this ``two-level'' scheme  is that $C_1 \cup C_2$ has size much smaller than $X$, and, thus, maintaining a coreset of it will be  more efficient than maintaining it for $X$.
More specifically, when the dataset is updated by an insert or remove operation, half of the work is already done: if the update happens to be in $X_1$, then $C_2$ is still an $\eps/2$-coreset for $X_2$ and does not need to be recomputed. Leveraging this idea recursively allows to dynamically maintain a coreset with essentially $O(\log n)$ applications of Property~\ref{lemma:base2}.

More formally, the dataset is hierarchically decomposed in a bottom-up fashion as follows: the algorithm maintains a partition of the dataset $X$ into parts of size $1$, that are recursively assembled in a binary tree structure. A tree node $v$  \emph{represents} the set of points contained in the leaves of the subtree rooted at $v$, and  we denote by $p(v)$  the parent of $v$, and by $\gamma_1(v)$, $\gamma_2(v)$ the children of $v$ in the tree. 

The algorithm ensures that, after processing an update, the following invariant holds: for each tree node $v$ at height $\ell$, there is a set $C(v)$ that is a $\frac{\eps \ell}{\lceil \log n\rceil}$-coreset of the points represented by $v$. 
This ensures the correctness of the algorithm: for the root $r$, the set $C(r)$ is an $\eps$-coreset of the current dataset.
This invariant is maintained using the following operations for point insertions and deletions.
At first we assume that the number of points in the dataset is about $n$.

\medskip

\paragraph{Insert($x$)}: to insert the point $x$, the algorithm adds $x$ to the first empty leaf $v_0$ of the tree.  To restore the invariant the algorithm needs to update every set $C(v)$, for all ancestor $v$ of $v_0$ in the tree. For this, let $v_1, ..., v_{\log n}$ such that $v_i = p(v_{i-1})$ be the sequence of ancestors of $v_0$ ordered by height. The sets $C(v)$ are updated in this order: first, let $C(v_0)$ be an $\eps/\lceil \log n\rceil$-coreset for $v_0$. Then, for $i = 1$ to $\log n$, let $C_1$ and $C_2$ be the $\frac{\eps (i-1)}{\lceil \log n\rceil}$-coresets of $\gamma_1(v_i)$ and $\gamma_2(v_i)$: let $C(v_i)$ be an $\eps/\lceil \log n\rceil$-coreset of $C_1 \cup C_2$. Using  Properties~\ref{lemma:base} and \ref{lemma:base2}, this is indeed an $\frac{\eps i}{\lceil \log n\rceil}$-coreset of the points represented by $v_i$, and therefore the invariant is satisfied at the end of this procedure.

\paragraph{Delete($x$)}: the algorithm identifies (with e.g. a hashtable) in which leaf $x$ is stored, and removes it. Afterwards, all coresets of parents of the leaf involved are updated similarly as for an insertion, in order to maintain the invariant.

Details of how to delete and insert leafs are given in \Cref{par:ap_illustrations}.

\medskip

In case the number of points in the dataset is not roughly $n$ but varies a lot, the work of the dynamic algorithm is partitioned into \textit{phases}. Each phase ends after the number of points in the dataset changed by $50\%$ compared to the beginning of the phase. At the beginning of a phase, each set $C(v)$ is recomputed in order to maintain the invariant regarding the core-set size -- as the value of $\lceil \log n\rceil $ changes between phases, the precision of the coreset in each tree node also has to change.
The recomputation time can be amortized over the updates that caused the change in $\lceil \log n\rceil$. By performing the recomputation ``spread out'' over subsequence updates, the time complexity can even be turned into a worst-case bound (see   \cite{henzinger2020}), but as this is not the focus of our work, we did not implement this.

\medskip
This algorithm uses a static coreset construction as a blackbox. As explained in \Cref{sec:static}, the state-of-the-art is the algorithm from \cite{Feldman2011}, with a bicriteria solution computed with \kmpp. We use this algorithm both for the dynamic algorithm and as a baseline. We note that better results may be achievable with a different coreset construction: however, we expect the relative comparison between the different dynamic algorithms to stay alike.

\subsection{Running time optimizations.} \label{par:dyn_opt}

We propose several improvements to the implementation of the previous algorithm, some are quite standard and some are new heuristics. Our first optimization is to compress the lower levels of the tree: instead of each leaf representing a single point, we introduce a parameter $s$, and make each leaf represent between $s/2$ and $s$ points -- except for one special leaf that may contain fewer than $s/2$ points. All insertions are performed in the special leaf, and whenever it reaches size $s$, it is turned into a ``normal" leaf, and a new special one is created.
This requires to slightly change the insert and delete procedures.
Furthermore, instead of computing a precise $\eps/\log n$-coreset, we parameterize our experiments directly by the size of the coreset $s$. Taking $s \approx \frac{kd \log n}{\eps^2}$ ensures that we compute an $\eps$-coreset with some good probability. In practice much smaller coreset size work just as well, as our experiments show.

Our main approach for optimization is to perform lazy updates in order to group the recomputations together: we show how to maintain the coreset while avoiding some recomputations. Our improvement for insertion provably maintains a coreset, while the deletion case is only heuristic.

\subsection{Optimizing Insertions.} \label{par:opt_ins}
For insertions, each tree node has an \emph{epoch}: an epoch starts (i) either after $s$ points have been inserted into the set represented by the node, (ii) when a point is removed from the node or (iii) when the coreset of the node is recomputed for another reason (e.g. beginning of a new phase). If a new epoch is started for one node, the same is done for all ancestors of this node. 
During an epoch, the tree node maintains two coresets: one for the points present at the beginning of the epoch, and one for newly inserted points. Since there are at most $s$ inserted points, it is very easy to maintain the latter: it merely consists of all inserted points, with their initial weights. The former is constant, as a new epoch is started if a point is removed. 
Property~\ref{lemma:base} ensures that the union of those two sets is a coreset for the whole set represented by the tree node. Therefore, the algorithm stays correct.
Furthermore, at the start of a new epoch, a coreset of size $s$ is recomputed for all points that are currently represented by the node; and the coreset for newly inserted points is set to $\emptyset$.

Although this optimization has no effect in the worst-case scenario, it significantly improves the running time in insertion-only streams, as shown in the next lemma. This hints that this optimization is beneficial when several points are added in a row.

\begin{lemma}
    In a stream of $n$ insertion, the algorithm described above has total running time $\tilde O(nkd)$, and maintains a valid $\eps$-coreset at each time step.
\end{lemma}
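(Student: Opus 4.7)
My plan is to split the argument into correctness and running time.

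For correctness, I would prove by induction on the height $\ell$ of the tree that at every moment, for every node $v$ at height $\ell$, the set $C^{static}(v) \cup C^{new}(v)$ is an $(\ell \cdot \eps / \lceil \log n \rceil)$-coreset of the points currently represented by $v$. At a leaf (base case), $C^{static}(v)$ is an $\eps/\lceil \log n \rceil$-coreset of the points present when the current epoch started, and $C^{new}(v)$ consists of the (at most $s$) points inserted since then, each carrying its original weight; Property~\ref{lemma:base} immediately gives the claim. For the inductive step at an internal node $v$, $C^{static}(v)$ was computed at $v$'s last epoch restart from the union of the two children's stored sets, which, by induction, formed valid coresets of the children's current data at that moment. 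Applying Property~\ref{lemma:base} to merge the two children's coresets and Property~\ref{lemma:base2} to account for the additional compression performed at $v$ yields the same recurrence on error as in the base algorithm, bounding the error at height $\ell$ by $\ell \cdot \eps/\lceil \log n \rceil$ up to negligible cross terms.

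For the running time, the key is to count coreset recomputations. Since insertions fill leaves sequentially, there are at most $\lceil n/s \rceil$ leaf-level epoch restarts over the whole stream, each propagating to its at most $\lceil \log n \rceil$ ancestors, for a total of $O((n/s)\log n)$ epoch restarts over all tree nodes. Rule~(i) at internal nodes does not cause extra restarts: an internal node's $s$-counter can only reach $s$ once a descendant leaf has just filled, at which point the ancestor chain has already been triggered via the propagation rule. Each recomputation runs the static construction of \Cref{sec:static} on the union of the two children's stored sets, which has size $O(s)$, at cost $\tilde O(skd)$. Summing yields $\tilde O((n/s) \log n \cdot s k d) = \tilde O(nkd)$. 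The per-insertion bookkeeping, appending the new point to the $O(\log n)$ ancestor $C^{new}$ sets, contributes only $O(n \log n)$, and the full rebuild done at the start of each of the $O(\log n)$ phases costs $O(nkd)$ per phase; both are absorbed into the claimed bound.

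The main obstacle I expect is pinning down the correctness step, specifically verifying that the ancestor-propagation rule keeps each node's ``points since my last epoch'' view consistent with its descendants, so that at every moment $C^{new}(v)$ together with $C^{static}(v)$ represents exactly the current contents of $v$'s subtree (modulo the composed coreset compression). Once this invariant is nailed down, the two composition properties give the error bound almost mechanically, and the running time reduces to counting how many times the static subroutine of \Cref{sec:static} is invoked on inputs of size $O(s)$.
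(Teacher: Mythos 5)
Your proposal is correct and follows essentially the same route as the paper: correctness via Properties~\ref{lemma:base} and~\ref{lemma:base2} together with the observation that the newly inserted points form a $0$-coreset of themselves, and running time by amortizing the $\tilde O(skd)$ cost of each epoch recomputation over the $s$ insertions that triggered it across $O(\log n)$ ancestors. Your version is somewhat more explicit than the paper's (direct counting of the $O((n/s)\log n)$ recomputations instead of a budget argument, plus spelling out the height induction and the phase rebuilds), but the underlying argument is the same.
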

\begin{proof}
    Running the static coreset algorithm on a set of size $O(s)$, to produce a coreset with size $s$, has running time $O(skd)$. 
    To recompute the coreset of a node at the beginning of an epoch, the algorithm collects all coresets maintained by the children of this node -- there are $4$ of them, $2$ for each children -- and then  computes a coreset of size $s$ from the union of those $4$ coresets. This takes time $O(skd)$, as explained above. 
    Furthermore, any tree node has at most $O(\log n)$ many ancestors in a tree. Therefore, if a new epoch is started at a node, $O(\log n)$ coreset are recomputed, and the total complexity is $O(s k d \log n)$. 
    Now, if every inserted point gives a budget $O(kd \log n)$ to each of its $O(\log n)$ ancestors, then, when a new epoch starts from one node, its budget is enough to pay for the total recomputation. Therefore, the overall complexity after $n$ insertions is at most the total budget, which is $\tilde O(nkd)$.

    The set maintained by the algorithm is an $\eps$-coreset: the proof follows from Property \ref{lemma:base} and the fact that, during a given epoch, each node stores a $0$-coreset for the points inserted during that epoch.
\end{proof}

\subsection{Optimizing Deletions} \label{par:opt_del}
We propose the following heuristic. 
When $p$ is deleted, the algorithm first checks if $p$ is part of the \emph{final} coreset. If this is not the case, the point is marked for removal, but no further changes are made. 
Otherwise, the algorithm removes $p$, and all other points currently marked for removal, using the non-optimized algorithm. 
When recomputing the coreset of a node due to an insertion, all points marked and represented by this node are deleted (and unmarked).
We note that, when the final coreset has size much smaller than $n$, many points may be deleted before a recomputation is triggered -- if points are deleted randomly, one needs to delete $O(n/s)$ points before hitting one that is in the final coreset.
However, those deleted points still influence the coreset computation (by being influential at lower levels of the tree). In order to balance the running time improvement and the quality of the coreset, we introduce a cut-off $\Delta$, and proceed to recomputation also when $\Delta \cdot n$ of the total points are marked to be removed. We present in \Cref{par:ex_opt_del} different experiments to optimized $\Delta$ in two particular cases:

$\bullet$ if points are deleted in the same order as they are inserted -- e.g., a sliding windows -- then it is best to set $\Delta \approx s/n $. In theory, this would lead to a speed-up of the order $s$: indeed, $s$ points  inserted consecutively are stored in the same tree leaf by our algorithm. Therefore, instead of recomputing coresets for this leaf and its parents $s$ times, our optimization recompute it only once.

$\bullet$ if points are deleted randomly, then they come from potentially very different leaves and the previous argument does not apply. However, random deletions have a weaker affect on the \kmeans~cost, and it appears possible to take a much larger $\Delta$, e.g., $\Delta = 3\%$. This ensure that, in average, the $k$-means cost changes by only $3\%$, which is the order of error we expect from our algorithm. We show experimentally that such a large $\Delta$ indeed yields to a large speed-up.

Those results hint how to pick $\Delta$ according to the expected evolution of the dataset.

\subsection{Flattening the Tree}\label{par:shallowTheory}
Our algorithm is heavily reliant on a balanced binary-tree datastructure, which has to be recomputed regularly.
Similar tree-like datastructures are commonly used in practice and can often be made more efficient by fixing the number of levels of the tree. Such strategies may increase the work done at each level, but can reduce overhead significantly. 
We tested different ways of flattening the tree into a \emph{shallow tree}. 

We suggest to use a complete $g$-ary tree with a fixed height $h$. To find the optimal degree $g$ of the internal nodes, we balance the work done in the leafs with the work in the remaining nodes as follows: there are $g^h$ many leaves, each of those containing $n/g^h$ many points. Computing a coreset on a leaf costs therefore essentially $k \cdot n/g^h$. On a tree node of arity $g$, the algorithm collects $g$ coresets of size $s$ and computes a coreset of those: therefore, the work done is $k \cdot gs$. To balance this work, we want $g$ such that $k \cdot n/g^h = k \cdot gs$, i.e., $g = \sqrt[h+1]{n/s}$.

This optimization requires a priori knowledge of the size of the dataset $n$: to cope with it, the algorithm can work in phases, and recompute from scratch every time the value of $n$ changes by a large factor. We simplified the setting for our experiments and ensured the number of points stayed close to a fixed value.
Those experiments show that, the shallow tree may provide a speed-up, however highly dependent on the size of the dataset and the parameters $k$ and $s$: typically, for $h=2, k=50, s=90$, the algorithm is faster for $n = 20.000$ but slower for $n=60.000$; this trend is reversed for $h=5$. Precise results are shown in \Cref{ap:shallow}. 
It appears that the improvement is uncertain, as it depends on the size of the dataset. 
Therefore, we find it preferable to have an adaptive height, as in our original algorithm. 
However, this experience shows the potential benefits of an a priory knowledge on the data set size.

\subsection{Space Requirement and Data Structure.} \label{par:memTheory}
In the algorithms as described above, the total memory size is $2nd$: at each node of the tree, $s$ points are stored (which takes memory $sd$), and a binary tree with $n/s$ leaves contains $2n/s$ nodes in total.
However, we use the following approach to reduce  the memory requirement to $nd$, i.e., improve it by a factor 2.
First, instead of storing the $d$-dimensional points in each node, one can store pointers to those input points.\footnote{More precisely, we store each input point in a hashtable with keys the identifier of each point, in order to ensure fast deletions of the points.} This works as long as coreset points are input points. This may not always be the case, as centers from the bicriteria approximation algorithm also are coreset points.
One can modify the used bicriteria algorithm to ensure that all those centers are input points by removing the Lloyd step of the \kmpp~algorithm. The drawback is that the quality of the bicriteria approximation worsen in practice, which in turn may worsen the quality of the coreset.
The weight (which fits into a single variable) still need to be stored explicitly in each tree node. With those improvements,the memory requirement is only $nd + n$ ($nd$ for the original points and $n$ for all the pointers).  We tested this in \Cref{par:ex_lessMemory}.

\section{Experiments}

\subsection{Experimental Setup}

\subsubsection{Performance metrics.}
In all experiments, we measured (1) the distortion of the coreset, (2) the quality of the $k$-means solution computed on the coreset, and (3) the running time of maintaining both the coreset and a $k$-means solution. All experiments were repeated five times and the average result is reported. For the running time measurements, we timed the computation of the coreset itself and the computation of the \kmeans~solution  with the C\texttt{++} library Chrono. (We did not include the time needed to evaluate the distortion and quality.) Note that we use a logarithmic scale whenever we plot running time results.

\paragraph{Coreset distortion:}
To evaluate how well a coreset $\coreset$ represents the original datapoints of dataset $X$, one can consider a set $\calS$ of candidate solutions, and evaluate the \emph{distortion} of the coreset on each of those solutions as follows:
\begin{align*}
D_\coreset = \max_{S \in \calS} \left(\max \left(\frac{\cost(S,X)}{\cost(S,\coreset)},\frac{\cost(S,\coreset)}{\cost(S,X)} \right) \right) -1
\end{align*}

If $\calS$ is the set of all possible solutions, this definition ensures that $C$ is a $D_C$-coreset for $X$.
However, it is impossible to enumerate efficiently over all those solutions -- and \cite{schwiegelshohn2022} showed the co-NP-hardness of checking whether a given set is a coreset. To cope with this, \cite{schwiegelshohn2022} proposed to consider $\calS$ to be a single solution, computed via \kmpp~algorithm on the  coreset $C$. 
We slightly extend $\calS$ by also adding a solution computed by \kmpp~on the full dataset $X$. 
Note that \cite{schwiegelshohn2022} showed that adding solutions generated uniformly at random under some natural distributions to $\calS$ was pointless, as for those random solutions the quality of the coreset is always very good. 

\paragraph{\kmeans~quality:}
As our coreset can be used to dynamically compute a solution $S_\coreset$ to \kmeans, we  compare the cost of $S_\coreset$ to a solution $S_X$ computed on the full dataset as follows:
$$Q_\kmeans = \cost(S_X,X) / \cost(S_\coreset,X).$$ 
To get the solution $S_X$, we use again the \kmpp~algorithm.

\subsubsection{Data sets and Update Sequences.}\label{par:datasets}
Since most data sets available for clustering are not intrinsically dynamic, we followed different (standard) strategies to simulate a dynamical behaviour, starting from an ordered data set, 

\begin{enumerate}[noitemsep]
\item An \emph{insertions-only} data set can be generated by inserting each point after another in the same order as they occur in the static dataset.

\item A \emph{sliding window} of size $t$ can be used to include removal operations. First $t$ points are inserted without any point removals. In the next phase (called the \emph{window}), operations alternate between removing the oldest point in the dynamic dataset and inserting the next points of the static dataset. 

\item A \emph{random window} with insertion probability $\pi$ is created by inserting the next point of the static dataset with a probability of $\pi$ or removing a random existing point with a probability of $1-\pi$.

\item A \emph{snake window} of size $t$ is created by chaining multiple random windows together. First, a random window with insertion probability $0.9$ is constructed, until the dynamic dataset contains $t$ points. Then, the insertion probability is reduced to $\pi = 0.1$ and operations are added, until $0.2 \cdot t$ points remain in the dynamic data set. This pattern can be repeated, until all points of the static data set are inserted.
\end{enumerate}

These different datasets allow us to highlight various aspects of the algorithm, namely the impact of insertion (for insertion-only), the impact of making deletions in order (for sliding windows) versus random deletions (random window), and the impact of varying the size of the dataset (snake window).

\emph{Synthetic data set.} To optimize the parameters of the algorithms and show details of the running time behaviour, we used a subset of a synthetic data set with two dimensions, called \emph{Birch} \cite{Birchsets}, which consists of $100.000$ points having $100$ random Gaussian clusters with random size in $\R^2$. For this dataset, we order the points as follows: All points in the same cluster are placed consecutively in the order. Then we created and tested both an insertions-only as well as a sliding window dataset of size $20000$ with a $k$ of $10$. Due to the ordering and the fact that the number of ground truth clusters in the data set is larger than $k$, the optimal $k$-means solution changes over time. 
Finally, to test the algorithm in a more challenging setting, we also tested  a snake window of size $20.000$ with $80.000$ operations applied in total. We call the latter \emph{Birch-snake}.

\emph{Real-world data sets.} With the parameters chosen on the synthetic data sets we then tested our algorithms on larger real-world data sets from the UCI Database \cite{UCI_Data}. We used the datasets (1) \emph{Taxi} \cite{taxi}, which stores the start coordinates of taxi rides in two dimensions, (2) \emph{Twitter} \cite{twitter1,twitter2}, which contains location information of tweets in two dimensions, and (3)  \emph{Census} \cite{census} dataset, which has 68 dimensions. For all these data sets we removed duplicate points and reordered the points randomly. 

We constructed dynamic data sets for Twitter and Census by introducing a snake window of size $t$, where $t \in \{0.5\text{e}6, 1\text{e}6, 1.5\text{e}6\}$. To ensure that the size of the dataset stays roughly 
constant (and be able to have an indication of the running time for a given data set size),  we modified  the description above and kept the insertion probability at $0.1$ until $0.95 \cdot t$ points remain in the dataset.  To avoid side effects due to the construction of the dataset, we start the measurements only after the first $t$ operations. We then measured the performance of the next  $500.000$ operations.

For the data set Taxi, we created a dynamic dataset using a standard sliding windows of size $t \in \{0.5\text{e}6, 1\text{e}6\}$. Measurements were again taken during $500.000$ operations after the first $t$ points had been inserted.

See Table~\ref{tab:params} in Appendix for a listing of all the update sequences we created. The number of clusters $k$ was set to 10. 
Recall that theory recomments to choose $s$ linear in $k \log n$. Thus, we use $s = 5k$ for 
the small synthetic dataset and $s=50k$ for the real-world data sets.

\paragraph{Baseline algorithms.} \label{par:baselines}
We are not aware of any other implementations of dynamic coreset algorithms and, thus, we compare our algorithm with four simple baselines:

$\bullet$ \textit{static}, that computes a coreset from scratch after each update using Sensitivity Sampling in $\calO(|X|kd)$

$\bullet$ \textit{random}, that chooses after each update a uniform sample of the dataset in $\calO(s)$

$\bullet$ \textit{only  k-means}, which computes a \kmeans~solution directly on the data in $\calO(|X|kd)$. Obviously, this does not allow to compare corset distortion. The running time cannot be directly compared to other algorithms, since no coreset is created by this algorithm. However, if the ultimate goal is to compute a $k$-means solution, this baseline  still gives a worthwile comparison.

$\bullet$ \textit{ILP}, which calculates a \kmeans~solution (centers are required to be part of the dataset) directly on the whole data set. We used  the commercial solver Gurobi \cite{gurobi}. This algorithm was at least 10.000 times slower, with a cost close to the one computed by our algorithm. We defer the related discussion to Appendix \ref{par:ex_ILP}

Every experiment was performed on an Intel(R) Core(TM) i5-1235U CPU (4.4 GHz) with 16 GiB of RAM running on Ubuntu 22.04.2 LTS with Kernel 5.19.0-45-generic. All implementations are in C\texttt{++}, compiled using g\texttt{++} version 11.3.0 with the optimization flag -O3. 

\subsection{Analysis of the algorithms and calibration of parameters on small dataset}\label{sec:exp}

In this section, we analyze and compare the implemented algorithms using the small dataset \emph{Birch}. 

Note that the running times of the coreset algorithms include the time to compute a \kmeans~solution on top of the coreset, while the running time of \textit{only  k-means} only measures the time for computing a \kmeans~solution.

\begin{figure*}[t!]
    \begin{subfigure}[b]{0.49\textwidth}
    \centering
    \includegraphics[width=\textwidth]{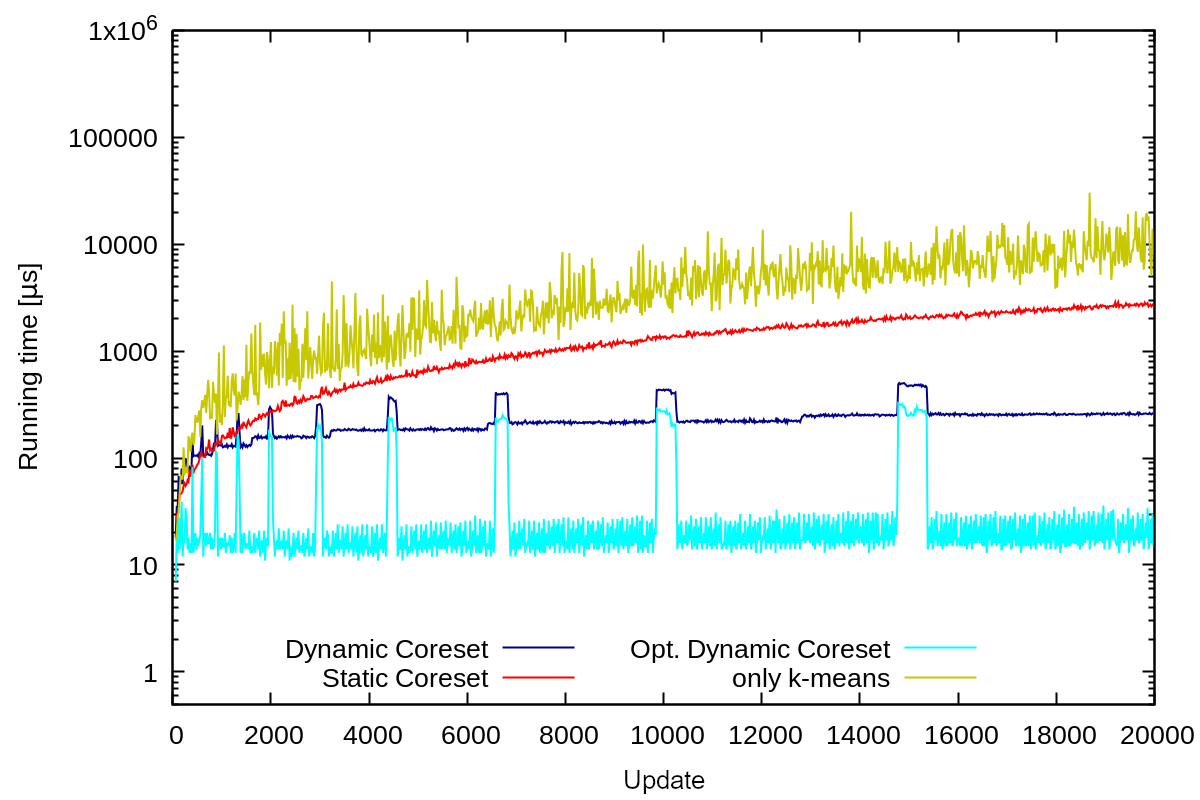}
    \caption{Insertion only dataset with a total of $20000$ operations}
   \label{fig:ex_timeOnlyInsert}
               \begin{minipage}{.1cm}
            \vfill
            \end{minipage}
    \end{subfigure}
    \hfill
    \begin{subfigure}[b]{0.49\textwidth}
    \centering
    \includegraphics[width=\textwidth]{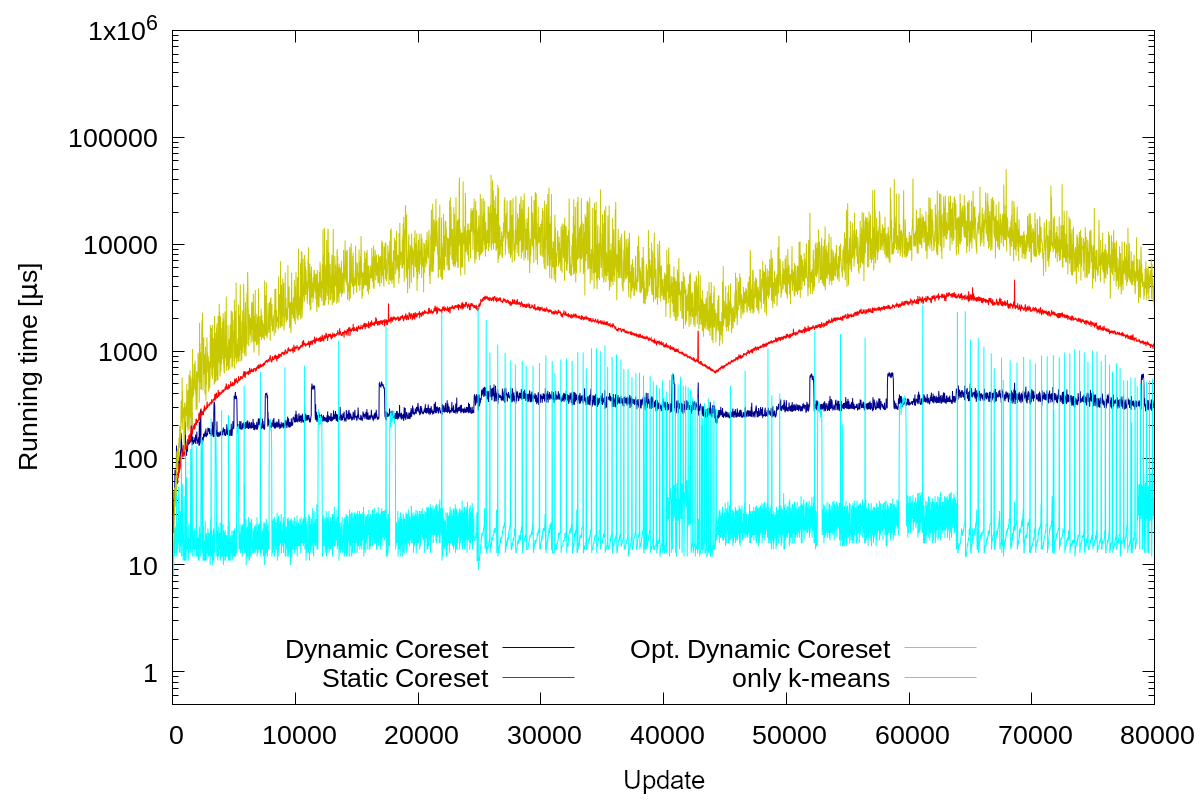}
    \caption{Snake Window with $80000$ operations and size $20000$}    \label{fig:ex_timeSnake}
                \begin{minipage}{.1cm}
            \vfill
            \end{minipage}
    \end{subfigure}
    \caption{\textbf{Running time comparison of all coreset algorithms.} The static data set Birch with $k = 10$ and $s = 50$  was used. The dynamic data sets were created as insert only and snake window as described in \Cref{par:datasets}. The plot shows the running time of the algorithm after each update to the dataset. The running times are averaged over 20 consecutive operations.}
    \label{fig:ex_time}
\end{figure*} 

\subsubsection{Optimizing Deletions} \label{par:ex_opt_del}
We first determine the parameter $\Delta$ needed for the optimized deletion algorithm from \Cref{par:opt_del}.
\Cref{fig:ex_d} presents the running time of the optimized dynamic algorithm for different values of $\Delta$.
For sliding windows, i.e. when points are removed in the same order as inserted, the best running time is achieved for $\Delta = 2s/n$. 
The reason is explained in \Cref{par:opt_del}: points inserted consecutively are stored in the same leaf, leading to a huge reduction in recomputation time, as recomputation happens only once $\Delta \cdot n$ points have been deleted, instead of after each deletion.
\footnote{Taking $2s/n$ instead of $s/n$ allows to save a bit more, as the two leaves concerned by the deletions are consecutive in the tree, and therefore share lots of ancestor, for which only one recomputation is needed with $\Delta= 2s/n$, while $2$ are required for $\Delta = s/n$.}
As  expected, the cost of the \kmeans~solution rises linearly with  $\Delta$, increasing by a factor $1.02$ for $\Delta = 2s/n$ compared to $\Delta = 0$. The plots are presented in \Cref{fig:ex_d} of the appendix.

Using a dataset where points are removed randomly, a much larger $\Delta$ is required to obtain the best possible speedup. However, the cost is also less affected, varying by less than $2\%$ for all tested $\Delta$. The reason is that removing points uniformly at random from all inserted clusters allows a \kmeans~solution to stay accurate for more operations. Exact results are show in the appendix in \Cref{fig:ex_d_random}.
Thus, as the impact on the \kmeans~cost is small, we choose $\Delta = 2s/n$ in the subsequent experiments where deletions appear in order, and $\Delta = 0.03$ otherwise.

\subsubsection{Running time comparison for only insertions.}\label{par:ex_onlyInserts}
The behavior of the algorithms when only inserting points can be seen in \Cref{fig:ex_timeOnlyInsert}.
The experiments confirm the theoretical running times of the different algorithms as described in \Cref{par:baselines}.
Both in theory and in the experiments, the static coreset and calculating the \kmeans~solution directly has a running time that is linear with the input size, while the dynamic algorithms have a logarithmic dependency.

\subsubsection{Running time comparison for sliding window.}\label{par:ex_window}
We evaluated the behavior of the algorithms on a sliding window. The running time of all tested algorithms stayed constant in the window, except for the optimized dynamic which pays every $2s = 100$ deletions for recomputing, and achieve an average speedup of $20$ compared to the base dynamic coreset. The static coreset and only \kmeans~were slower that the optimized dynamic coreset by factors $164$ and $790$ respectively. The exact results are shown in the appendix (\Cref{fig:ex_timeSlidingWindow}).
We note, that the data points are removed in the same order as they were inserted, resulting in optimal conditions for the optimized dynamic algorithm.

\begin{figure*}[t!]
    \centering
    \begin{subfigure}[b]{0.49\textwidth}
    \includegraphics[width=\textwidth]{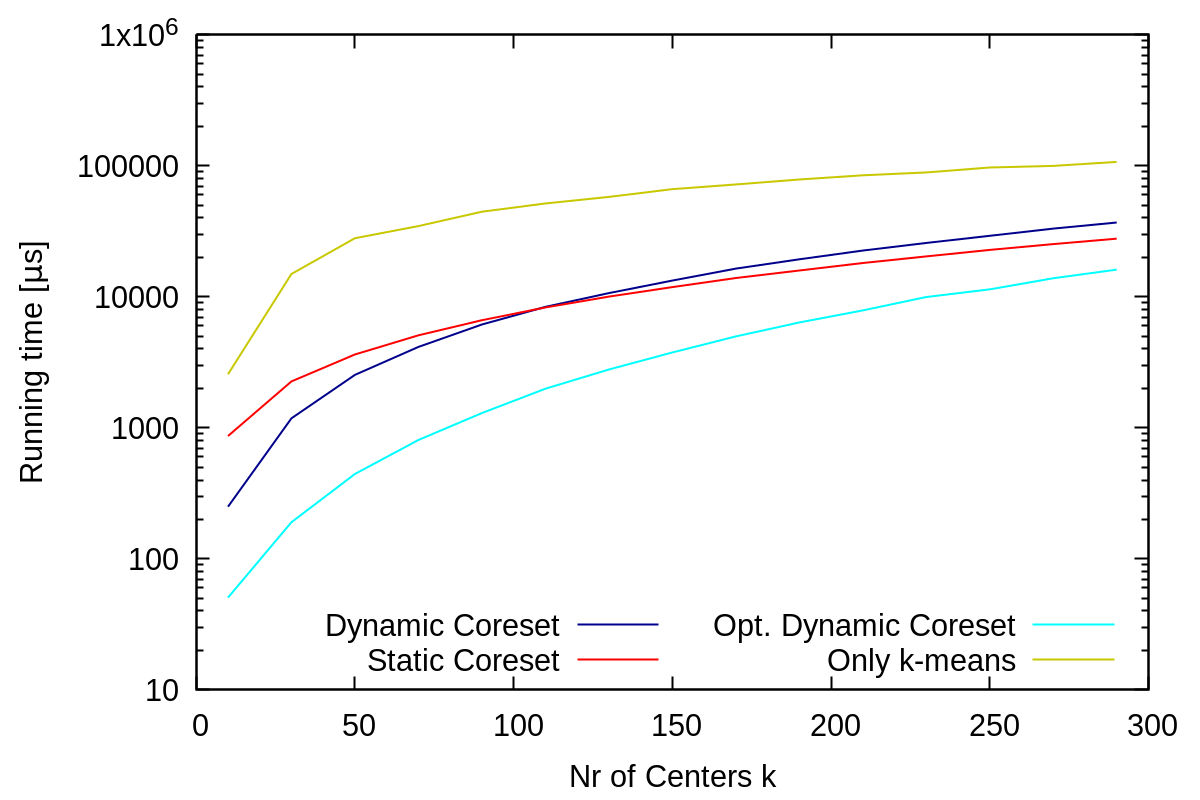}
    \caption{Running time comparison}
    \end{subfigure}
    \hfill
    \begin{subfigure}[b]{0.49\textwidth}
    \centering
    \includegraphics[width=\textwidth]{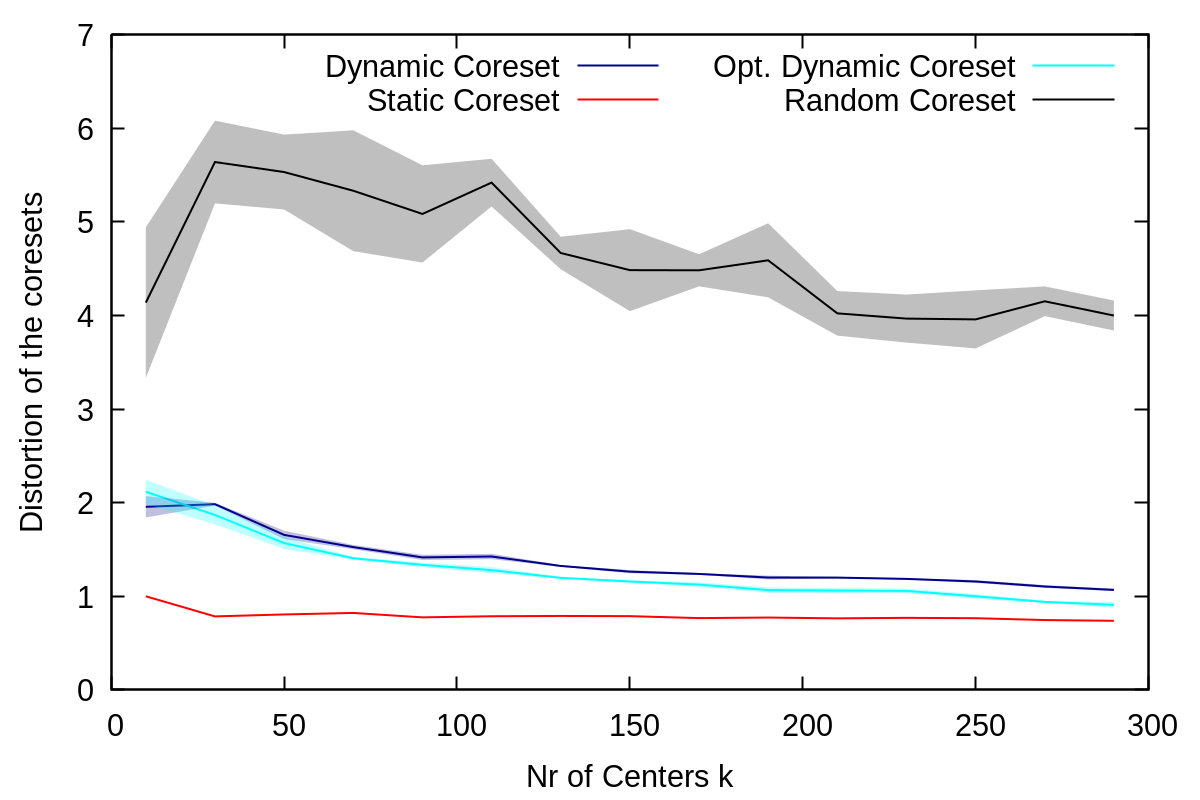}
    \caption{Distortion of the coresets}
    \end{subfigure}
    \caption{\textbf{Impact of \emph{k}.} The Birch-snake dataset was used with a varying \emph{k}. The coreset size was set to $5k$ and time measurements are averages over all update operations after the first $20.000$ points have been inserted. The experiments were repeated five times and the median was used.}
    \label{fig:ex_k}
\end{figure*}

\subsubsection{Running time comparison for Snake Window.} \label{par:ex_snake}
 The snake-windows dataset  represents a setting where both the size and points in dynamic data change drastically. The running time of the different algorithms can be seen in \Cref{fig:ex_timeSnake}. As the  number of points in the dynamic data set periodically decrease and increase in $4000$ operation intervals, both the running time of the static coreset and \kmeans~on the total data set exhibit both a periodic pattern. This is not the case for the dynamic coreset algorithms. The main reason for the lack of symmetry is that randomly removing and inserting points leads to an imbalance in the tree, where each leaf no longer stores the exact same number of points. This slightly reduces the efficiency of both dynamic coreset algorithms. The spikes in the optimized dynamic coreset algorithm indicate when recomputations happened.

\subsubsection{Impact of Coreset Size $s$.} \label{par:ex_compareS}

The effect of the coreset size on running time and \kmeans~cost was tested with $k = 10$ and $s \in \{40, ..., 280\}$. 
We present the exact results in the appendix (\Cref{fig:ex_c}).

The first observation is that the increase in running time with $s$ is faster for the dynamic coresets than for the static coreset. 
This is due to the fact that the dependency in $s$ of the static algorithm is $sk$ (to compute one $k$-means solution on the coreset of size $s$), while the dynamic algorithm needs to compute $\log n$ coreset of size $s$ -- hence running time $s k \log n$. 
With the used settings, the base dynamic coreset is already slower than the static coreset at $s = 350$. However, the number of points in the dynamic data set used is relatively small ($20.000$ points). For experiments with larger datasets, see \Cref{par:ex_large}.

Looking at the distortion of the different coresets, selecting points randomly is beneficial when the size of the coreset $s$ grows and starts competing with the other algorithms when $s = 150$, but can be very poor for small values of $s$.

\subsubsection{Impact of number of clusters $k$.} \label{par:ex_compareK}

The effect of $k$ on running time and coreset distortion is plotted in \Cref{fig:ex_k}. 
For each value of $k \in \{10,..., 280\}$, a point corresponds to the average running time (respectively cost) of each algorithm between operation $20.000$ and $80.000$ of the dataset -- to avoid side effects due to the initial insertions. 

We observe that the value of $k$ has slightly positive impact on the quality of all the algorithms, but their relative order remains the same. 
However, for running time, the non-optimized dynamic algorithm performs poorly for large values of $k$: For $k$ roughly 100 or above it is slower than the static algorithm. This is due to the fact that the dynamic implementation is quite intricate, and its running time scales with $sk \approx k^2$: Thus it cannot compete with the simple coreset construction for large values of $k$. However, this effect is mitigated with our optimized implementation, which is faster than the static algorithm for all values of $k$ we experimented with and also has a better coreset quality than the non-optimized dynamic algorithm.

\subsubsection{Reduction of memory overhead.} \label{par:ex_lessMemory}
We also tested a variant of the algorithm without the Lloyd step in the bicriteria algorithm, as described in \Cref{par:memTheory}. We use $k$ of 10 and $s = 50$ with the Birch-snake dataset.
The results are the following: as expected, the memory footprint of the tree was drastically reduced. However, 
to process deletions, the coreset is stored in a hashtable (unordered map), and, thus, the memory reduction was only $8\%$. Furthermore, the running time was reduced by $~11\%$, and the cost of the \kmeans~solution increases by $5\%$. We decided to privilege quality over memory in the remaining, and did not use this variant for our experiments on large data sets.

\begin{table*}[t!]
\centering
\caption{\textbf{Results of experiments with large datasets.} Speedup of the optimized dynamic coreset over each of the other algorithms, and also \kmeans~quality ($Q_k$) and coreset distortion ($D_\coreset$) for each algorithm. ODyn is the optimized dynamic coreset, Dyn the dynamic coreset, Stat the static coreset, Rand the random coreset and KM is only \kmeans. The datasets have size $t$ and are created
with $k = 10$, $s = 50k$.}
\begin{tabular}{ll|rrr|rr|rr}
\toprule
\multicolumn{2}{c|}{Input} & \multicolumn{3}{|c|}{Twitter} & \multicolumn{2}{|c|}{Taxi} & \multicolumn{2}{|c}{Census}  \\
\multicolumn{2}{c|}{t $[e6]$} & 0.5 & 1.0 & 1.5 & 0.5 & 1.0 & 0.5 & 1.5 \\
\midrule
\multirow{4}{*}{Speed-up}&Dyn&11&12&12&20&23&2.5&2.59 \\ 
&Stat&240&447&580&270&550&158&431 \\
&Rand&0.039&0.051&0.049&0.037&0.054&0.02&0.02 \\
&KM&850&1663&2100&1100&2000&2028&5717 \\
\midrule
\multirow{4}{*}{$Q_k$} &ODyn&0.96&0.96&0.95&0.83&0.85&0.96&0.95 \\ 
&Dyn&0.95&0.96&0.95&0.83&0.84&0.95&0.95 \\
&Stat&0.95&0.95&0.94&0.89&0.89&0.97&0.97 \\
&Rand&0.91&0.92&0.92&0.46&0.16&0.96&0.96 \\
\midrule
\multirow{4}{*}{$D_C$} &ODyn&0.62&0.62&0.65&0.91&1.3&0.66&0.71 \\ 
&Dyn&0.62&0.64&0.67&0.91&1.3&0.66&0.75 \\
&Stat&0.11&0.10&0.10&0.37&0.57&0.09&0.09 \\
&Rand&0.12&0.20&0.21&3.8&22&0.07&0.08 \\
\end{tabular}
\label{tab:ex_large}
\end{table*}

\subsubsection{Shallow Tree.} \label{par:ex_shallow}

We implemented shallow trees of different height $h$ as described in \Cref{par:shallowTheory}. To give maximum advantage to the shallow tree algorithm,  we use a random sliding window with insertion probability $\pi = 0.5$ and size 20000, which keeps the data size roughly constant: no restructuring is necessary and the number of points in each leaf is predetermined. The Brich dataset with $k = 10,50 $ and $s = 50, 90$ was used.
To verify that the degree $g = \sqrt[h+1]{n/s}$ predicted by theory is indeed correct, we compared the running time when $h = 1$ with different values for $g$ using the parameters shown in \Cref{tab:params}. In \Cref{fig:ex_shallowTimeG}, we show that the observed minimum lies very close to the theoretical optimum of $g = \sqrt[h+1]{n/s} = 20$. Therefore, we used the theoretical values for $g$ in all further experiments.

When using the optimal $g$, the shallow tree of height one could compete with the base dynamic coreset. The speedup is however dependent on many factors like the height of the tree, the size of the coreset, $k$ and $s$. We show the speedup of shallow trees with  different heights $h$ and optimal $g$ in \Cref{fig:ex_shallowTimeH} in the appendix. As expected, there exists one optimal set of $g$ and $h$ for each combination of $n$, $k$, $s$ and changing either parameter would require a complete restructuring of the tree and a recalculation of all coresets. Especially when the size of the dataset varies over time, frequent restructuring of the tree would be necessary: this would yield essentially our main algorithm. 
Therefore, we 
did not use a shallow tree  for our experiments on large data sets

\subsection{Experiments on large data sets} \label{par:ex_large}
We present an evaluation of our optimized dynamic algorithm on real-world dataset.
We used the datasets described in \Cref{par:datasets}: Twitter and Census with a snake window and $\Delta = 0.03$ (as determined above), and Taxi with a sliding window and $\Delta = 2s/n$. 
We chose $k=10$ and $k=25$ (the latter is presented in Appendix~\ref{ap:largeDataset} and shows results for the Taxi and the Twitter data set) and $s=50k$.

We show in \Cref{tab:ex_large} the speedup of the optimized dynamic algorithm compared to the baselines, together with the respective  coreset distortion and \kmeans~qualities.

For $k=10$ (resp. 25) the speedup of the optimized dynamic coreset improves with larger input sizes and lies between 2.5 and 23 (resp. 10 and 19) compared to the dynamic coreset, between 158 and 580 (resp.~94 and 367)  compared to the static coreset and between 850 and 5717 (resp.~541 and 1725) for only \kmeans. 

Both for $k=10$ and $k=25$  the performance results are similar to the previous data set:
For both dynamic coresets, the standard and the optimized ones, the distortion is worse than for the static coreset algorithm. However, the quality of the solution remains high, where the dynamic coresets only incur a small quality penalty. \emph{Thus, due to the large speedups that it achieves, the optimized dynamic coreset algorithm is a clear improvement over the static one}. The running time improvement over the only-\kmeans~algorithm is even larger, but the quality of the solution is somewhat worse.  

The performance of the random coreset is very dependent on the structure 
of the used dataset. The fact that the solution quality of the random coreset on the Twitter and Census datasets are comparable to the results of only \kmeans~indicates that the clusters in those datasets have roughly the same size and exhibit no special structure. This is, however, not the case for the Taxi dataset, which indicates more structured clusters. \emph{It is, thus, not advisable to use the a random coreset without extensive a priory knowledge of the dataset.}

\section{Conclusion}
We empirically evaluated the theoretically best algorithm for maintaining a coreset for $k$-means in a fully dynamic setting given in \cite{henzinger2020}.
We show that our optimized variant of the algorithm offers a speed-up of 2 to 23 (increasing with the data set size), without loss in the quality, compared to the straightforward implementation of the dynamic algorithm in \cite{henzinger2020}. With our algorithm we can maintain a coreset as well as a $k$-means solution with a quality comparable to executing a \emph{static} algorithm at every time step, while achieving a speedup of about two orders of magnitude.
Note that this was the state-of-the-art prior to our implementation. 

We provide various strategies to optimize even further if some basic statistics of the dataset are known.
First one is to tune the co called \emph{deletion cutoff} $\Delta$ from our optimized dynamic coreset. While a $\Delta$ of $2s/n$ is sufficient for cases where points are deleted roughly in the same order as they are inserted, larger values of up to $0.03$ are beneficial when deletions occur randomly.

Second, a shallow tree can be used when the size of the data set stays roughly constant over time. Together with a priory knowledge of the size of the data set, the optimal arity of the shallow tree can be calculated and fixed over the computations to achieve a significant speedup. Without precise knowledge of these parameters, we recommend using the optimized dynamic algorithm instead. 

Third, we  present a strategy for reducing memory overhead with only a slight penalty in solution quality. This reduction in quality if however additionally compensated by a lower running time.

A natural question following our work is to extend those experiments to $k$-median clustering. Although theory predicts exactly the same behaviour, to our knowledge it has not been confirmed in practice -- the practical evaluation of coresets from \cite{schwiegelshohn2022} has also been performed only with respect to the $k$-means objective.

Another exciting question rises from our experiments: while both dynamic algorithms generate worse coresets than the static algorithm, surprisingly, the quality of the resulting \kmeans~solution is great. The difference with the static algorithm is at most $1\%$ for the dataset census (\Cref{tab:ex_large}), even when the distortion of the dynamic coreset is $0.72$ compared to $0.09$ for the static coreset (as shown in \Cref{tab:ex_large}).
Understanding and explaining this gap seems a promising avenue: coreset construction may be much better ``when it matters".

\FloatBarrier

\section*{Acknowledgements.} 
\erclogowrapped{5\baselineskip} This project has received funding from the European Research Council (ERC) under the European Union's Horizon 2020 research and innovation programme (Grant agreement No. 101019564 ``The Design of Modern Fully Dynamic Data Structures (MoDynStruct)'' and the Austrian Science Fund (FWF) project Z 422-N, project “Static and Dynamic Hierarchical Graph Decompositions”, I 5982-N, and project “Fast Algorithms for a Reactive Network Layer (ReactNet)”, P 33775-N, with additional funding from the netidee SCIENCE Stiftung, 2020–2024. 

D. Sauplic has received funding from the European Union’s Horizon 2020 research and innovation
programme under the Marie Skłodowska-Curie grant agreement No 101034413.

\newpage
\bibliographystyle{plain}
\bibliography{main}

\newpage
\appendix

\onecolumn
\FloatBarrier
\section{Illustrations of the Algorithms} \label{par:ap_illustrations}
This section gives an illustrative overview of the procedures to insert a leaf (\Cref{fig:treeInsert}) and remove a leaf (\Cref{fig:treeRemove}) from the used tree structure.
\begin{figure}[h]
    \centering
    \includegraphics[width=0.8\textwidth]{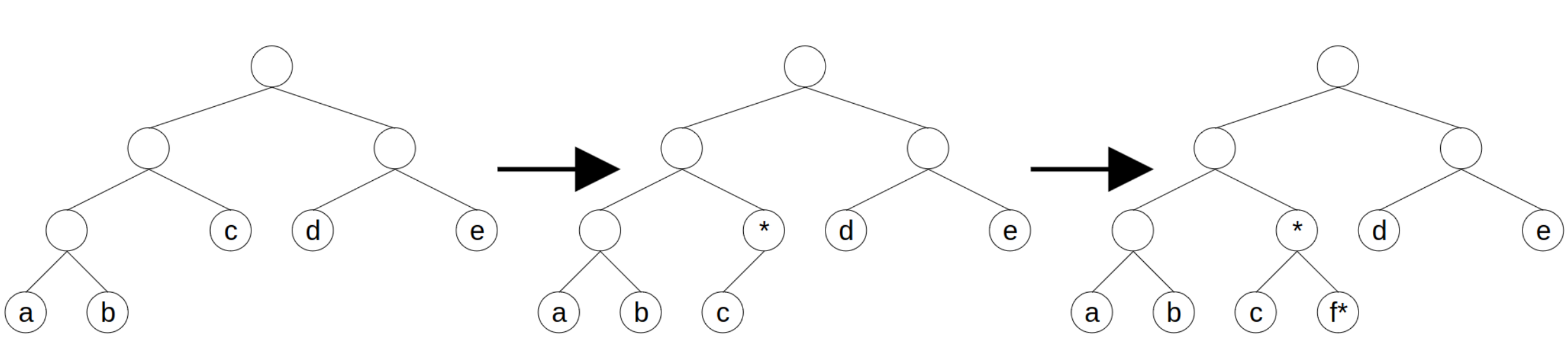}
    \caption{\textbf{Insertion of a new Node.} Nodes are represented as circles and leafs are labeled $a$ to $f$. New nodes are marked with an asterisk. To insert a new leaf, a new child node ($^*$) is added to the leftmost leaf with the lowest depth ($c$) and then swapped with its parent node ($c$). The new leaf ($f^*$) is then added as a sibling of $c$.}
    \label{fig:treeInsert}
\end{figure} 
\begin{figure}[h]
    \centering
    \includegraphics[width=0.8\textwidth]{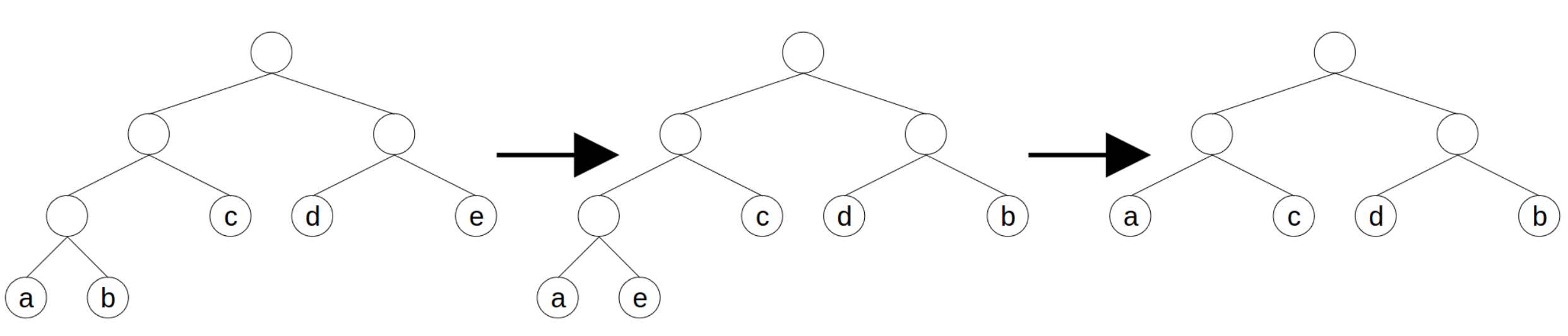}
    \caption{\textbf{Deletion of a Node $e$.} Nodes are represented as circles and leafs are labeled $a$ to $e$. To remove leaf $e$, it is first swapped with the rightmost leaf of the tree with the largest depth ($b$). In the second step, the sibling of $e$ is swapped with its parent node and made a new leaf by deleting all its children. }
    \label{fig:treeRemove}
\end{figure} 
\FloatBarrier

\section{Pseudo-codes} \label{ap:code}

Here, we introduce the algorithms described in the main text as pseudo code. The include the \kmpp~algorithm with input weights (\Cref{alg:KMeans++}) and the algorithm used to select point for the coreset of Sensitivity Sampling (\Cref{alg:Coreset}). We furthermore describe how to insert and remove points from the dynamic coreset and its underlying tree structure (\Cref{alg:Insert}, \Cref{alg:Remove} and helper function \Cref{alg:InnerCoreset}).

\begin{algorithm}[h]
\caption{\kmpp$(X, \weight, n)$}\label{alg:KMeans++}
\begin{algorithmic}[1]
\Require A set $X$ with a weight function $\weight$, and an integer  $k'$
\Ensure A set $S$ with $k'$ centers
\State  Sample a point $x$ randomly from $X$ 
\State Initialize $S \gets x$
\While{$i < k'$}
\State Sample a point $x$ according to the distribution $\weight(x) \cost(x, S) / \cost_\weight(X, S)$
\State Update $S \gets S \cup \{x\}$
\EndWhile
\State Let $S' \gets \emptyset$
\For{Each cluster $C_s$ induced by $s \in S$}
\State $S' \gets S' \cup \{\frac{1}{|C_s}\sum_{p \in C_s} p\}$
\EndFor
\State \Return $S'$
\end{algorithmic}
\end{algorithm}

\begin{algorithm}[h]
\caption{Selecting Coreset Points}\label{alg:Coreset}
\begin{algorithmic}[1]
\Require A set $X$ with a weights function $\weight$, and integer $k$, Number of Coreset Points to select $s$
\Ensure A list of points $\coreset$ and weights $\weight'$, such that the multiset defined by $\coreset$ with weights $\weight'$ is an $\eps$-coreset.
\State $S \gets \kmpp(X, w, 2k)$
\State For any point $x$, let $\weight(S_x)$ be the total weight of its cluster in solution $S$. 
\State Define the sensitivity of $x$ $s(x) = \weight(x) \cost(x, S) / \cost_\weight(X, S) + \frac{1}{\weight(S_x)}$
\For{$i < s$}
\State Sample a point $x$ proportionate to $s(x)$
\State $\coreset[i] \gets x$, $\weight'[i] \gets \frac{\sum_{x'} s(x')}{(s-k) s(x)}$
\EndFor
\For{Each center $s \in S$ with cluster $S_s$:}
\State Add $s$ to $\coreset$ with weight $(1+\eps)\weight(S_s) - \weight'(S_s)$
\EndFor
\State \Return $\coreset, \weight'$
\end{algorithmic}
\end{algorithm}

\begin{algorithm}[h]
\caption{UpdateInnerCoreset($v, k, s$)}\label{alg:InnerCoreset}
\begin{algorithmic}[1]
\Require A tree node $v$, the cardinality of the internal coresets $s$ and the number of centers $k$. We are also given an algorithm Coreset($X,w,k,s$), that, for any set of points $X$ with weights $w$, returns a weighted coreset of size $s$ for $k$-means on $X$.
\If{ $v$ is a leaf}
 \State $C(v), w'(v) \gets $Coreset($C(v), w'(v), k, s$)
\Else
  \State $C_{init} \gets C(c_1(v)) \cup C(c_2(v))$ 
  \State $w \gets w(c_1(v)) \cup w(C_2(v))$
  \If{$|C_{init}| \leq s$} 
    \State $C(v) \gets C_{init}$, $w'(v) \gets w$
  \Else
    \State $C(v), w(v) \gets $Coreset$(C_{init}, w, k, s)$
  \EndIf
\EndIf
\State UpdateInnerCoreset($p(v), k, s$)
\end{algorithmic}
\end{algorithm}

\begin{algorithm}[h]
\caption{Insert(x, w, s, k)}\label{alg:Insert}
\begin{algorithmic}[1]
\Require A point $x$ with weight $w$ to insert into the tree $t$. The cardinality of the coresets $s$ and the number of centers $k$. 
\State We store $i$, which points to a leaf, which contains less than $s$ points or $NULL$ if no such leaf exists.
\If{$i == NULL$}
 \State Let $l$ be the leftmost leaf with the lowest depth in tree $t$
 \State Let $a, b$ be new nodes
 \State $c_1(l) \gets a$, $c_2(l) \gets b$
 \State $p(a) \gets l$, $p(b) \gets l$
 \State $C(a) \gets C(l), w'(a) \gets w'(l)$
 \State $i \gets b$
 \State $UpdateInnerCoreset(l,k,s)$ 
\EndIf
\State $C(i) \gets C(i) \cup x$ 
\State $w'(i) \gets w'(i) \cup w$ 
\State $UpdateInnerCoreset(p(i), k, s)$

\If{$|C(i)| \geq s$}
 \State $i \gets NULL$ 
\EndIf

\end{algorithmic}
\end{algorithm}

\begin{algorithm}[h]
\caption{Remove(x, s, k, v)}\label{alg:Remove}
\begin{algorithmic}[1]
\Require A point $x$ stored in leaf $v$ to remove from the tree $t$. The cardinality of the coresets $s$ and the number of centers $k$. 
\State We store $i$, which points to a leaf, which contains less than $s$ points or $NULL$ if no such leaf exists.
\State $C(v) \gets C(v) \setminus x$ 
\State $UpdateInnerCoreset(p(v), k, s)$
\If{$|C(v)| \leq s/2$}
\hspace{10pt} \State Let $r$ be the rightmost leaf with the highest depth in tree $t$ and $x$ its sibling
\hspace{10pt} \State $C(i) \gets C(i) \cup C(v)$, $w'(i) \gets w'(i) \cup w'(v)$
\hspace{10pt} \State $C(v) \gets C(r)$, $w'(v) \gets w'(r)$
\hspace{10pt} \State $swap(s, p(x))$
\hspace{10pt} \State Make $x$ a leaf
\hspace{10pt} \State $UpdateInnerCoreset(x,k,s)$ 
\hspace{10pt} \State $UpdateInnerCoreset(i,k,s)$ 
\hspace{10pt} \If{$|C(i)| \geq s$}
\hspace{20pt} \State $i \gets NULL$ 
\hspace{10pt} \EndIf

\EndIf
\end{algorithmic}
\end{algorithm}

\FloatBarrier
\section{Experimental Parameters}

\begin{table}[h!]
\centering
\caption{\textbf{Overview of experimental parameters}}
\begin{tabular}{l|r|r|r|r|r|r}
\toprule
Experiment & Dataset & Updates & size & k & s & $\Delta$ \\ 
\midrule
\hyperref[par:ex_opt_del]{Optimizing deletions} & Birch & sliding window & 10000 & 10 & 50 & 0-0.14\\ 
\hyperref[par:ex_onlyInserts]{Running time only inserts} & Birch & only inserts & 20000 & 10 & 50 & -- \\ 
\hyperref[par:ex_window]{Running time sliding window} & Birch & sliding Window & 20000 & 10 & 50 & $2s/n$ \\ 
\hyperref[par:ex_snake]{Running time snake window} & Birch & snake window & 20000 & 10 & 50 & 0.03\\ 
\hyperref[par:ex_ILP]{ILP} & Birch & Only inserts & 300 & 10 & 20 & -- \\ 
\hyperref[par:ex_compareS]{Impact of s} & Birch & snake window & 20000 & 10 & 40-280 & 0.03\\ 
\hyperref[par:ex_compareK]{Impact of k} & Birch & snake window & 20000 & 10-280 & $5k$ & 0.03\\ 
\hyperref[par:ex_shallow]{Shallow Tree impact of $g$} & Birch & sliding window & 20000 & 10 & 50 & -- \\ 
\hyperref[par:ex_shallow]{Shallow Tree impact of $h$} & Birch & sliding window & 20000-80000 & 10,50 & 50,90 & -- \\ 
\hyperref[par:ex_lessMemory]{Memory reduction} & Birch & snake window & 20000 & 10 & 50 & -- \\ 
\hyperref[par:ex_large]{Large Twitter} & Twitter & snake window & 0.5e6-1.5e6 & 10 & 500 & 0.03\\ 
\hyperref[par:ex_large]{Large Taxi} &  Taxi & sliding window & 0.5e6-1.0e6 & 10 & 500 & 2s/n\\ 
\hyperref[par:ex_large]{Large Census} & Census & snake window & 0.5e6-1.5e6 & 10 & 500 & 0.03
\\
\end{tabular}
\label{tab:params}
\end{table}

\FloatBarrier
\section{Additional experimental Data}
This section includes additional experimental data that is not part of the main body.

\FloatBarrier
\subsection{Large experiments.}\label{ap:largeDataset}
Here, we give the experimental results using large datasets and  $k = 25$.

\begin{table}[h!]
\centering
\caption{\textbf{Results of experiments with large datasets.} Speedup of the optimized dynamic coreset over each of the other algorithms, and also \kmeans~quality ($Q_k$) and coreset distortion ($D_\coreset$) for each algorithm.  ODyn stands for optimized dynamic coreset, Dyn for dynamic coreset, Stat for static coreset, Rand for random coreset and KM for only \kmeans. The datasets have size $t$ and are extracted from the static datasets Twitter and Taxi with $k = 25$, $s = 50k$ as described in  \Cref{par:datasets}.}
\begin{tabular}{ll|rrr|rr}
\toprule
\multicolumn{2}{c|}{Input} & \multicolumn{3}{|c|}{Twitter} & \multicolumn{2}{|c|}{Taxi}  \\
\multicolumn{2}{c|}{t $[e6]$} & 0.5 & 1.0 & 1.5 & 0.5 & 1.0 \\
\midrule
\multirow{4}{*}{Speed-up}&Dyn&10&11&12&18&19 \\ 
&Stat&106&229&367&94.0&176 \\
&Rand&0.02&0.04&0.04&0.02&0.03 \\
&KM&541&1090&1725&709&1275 \\
\midrule
\multirow{4}{*}{$Q_k$}&ODyn&0.94&0.94&0.95&0.88&0.81 \\ 
&Dyn&0.95&0.95&0.94&0.88&0.81 \\
&Stat&0.95&0.94&0.93&0.91&0.87 \\
&Rand&0.87&0.89&0.86&0.16&0.05 \\
\midrule
\multirow{4}{*}{$D_\coreset$}&ODyn&0.53&0.53&0.58&0.53&0.85 \\ 
&Dyn&0.54&0.54&0.60&0.57&0.86 \\
&Stat&0.10&0.11&0.11&0.23&0.39 \\
&Rand&0.20&0.18&0.22&12.6&67.1 \\
\end{tabular}
\label{tab:ex_large_k25}
\end{table}

\FloatBarrier
\subsection{Small experiments.}

Here, we show the experiments on small datasets used to optimizes the parameters of the tested algorithms. \Cref{fig:ex_c} contains the impact of the parameter $s$ on the dynamic algorithms. We show the detailed running time of the various algorithms in a sliding widow in \Cref{fig:ex_timeSlidingWindow}.

\begin{figure}
    \centering
    \begin{subfigure}[b]{0.49\textwidth}
    \includegraphics[width=\textwidth]{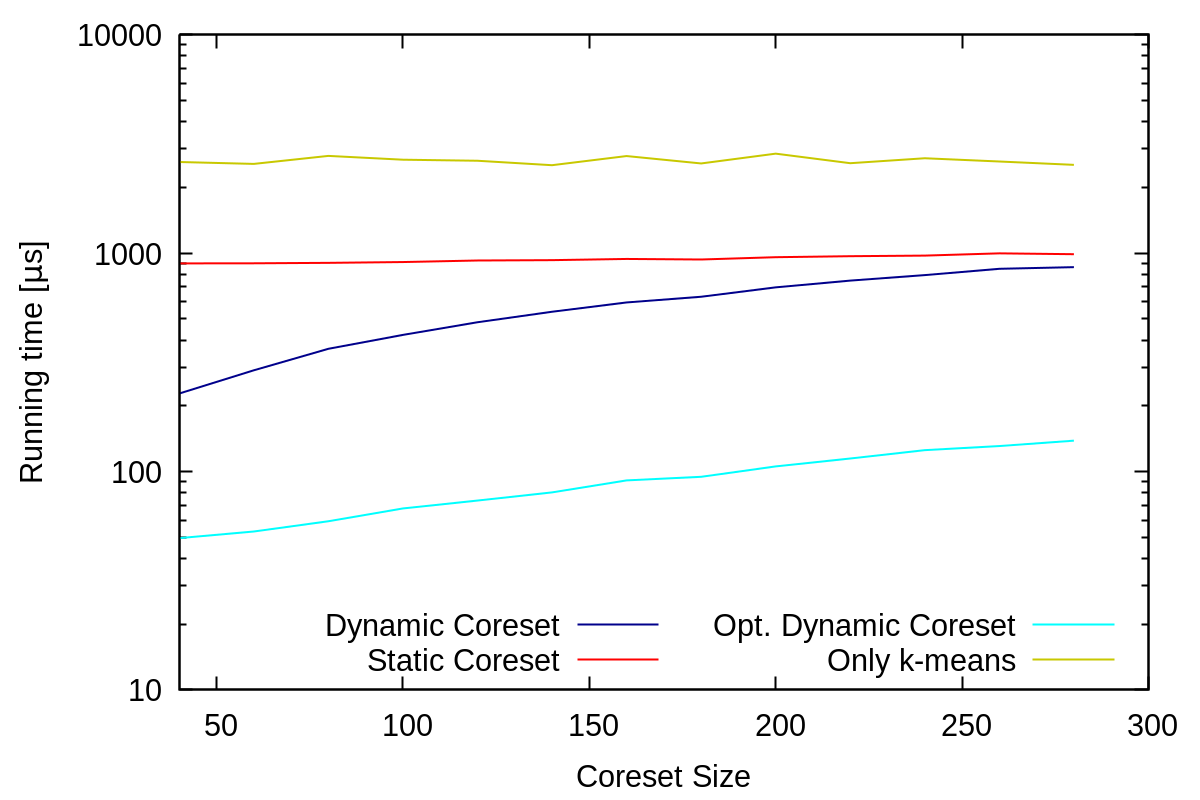}
    \caption{Running time comparison}
    \end{subfigure}
    \hfill
    \begin{subfigure}[b]{0.49\textwidth}
    \centering
    \includegraphics[width=\textwidth]{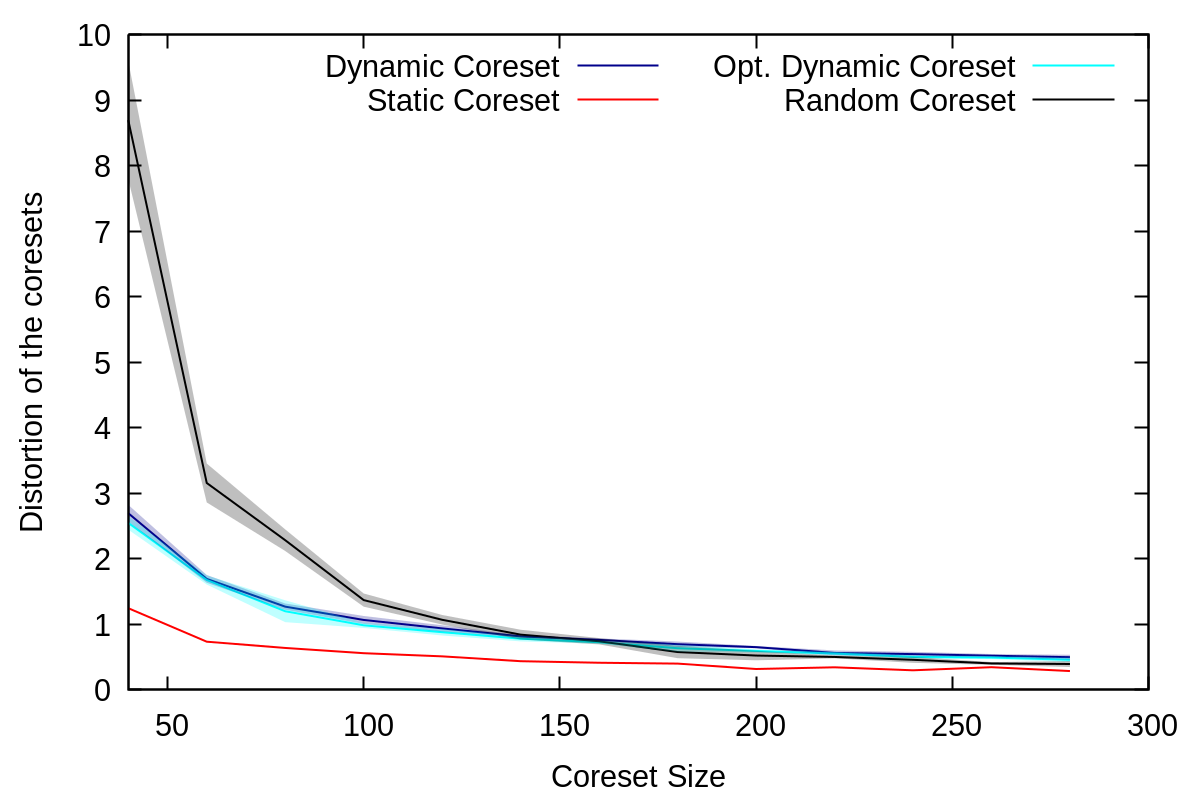}
    \caption{Distortion of the coresets}
    \end{subfigure}
    \caption{\textbf{Impact of \emph{s}.} The Birch-snake dataset with the parameters given in \cref{tab:params} was  used with a varying \emph{s}. Measurements are averaged over all update operations inside of the window. The experiments were repeated five times and the median was used.}
    \label{fig:ex_c}
\end{figure}

\begin{figure}[h]
    \centering
    \includegraphics[width=0.7\textwidth]{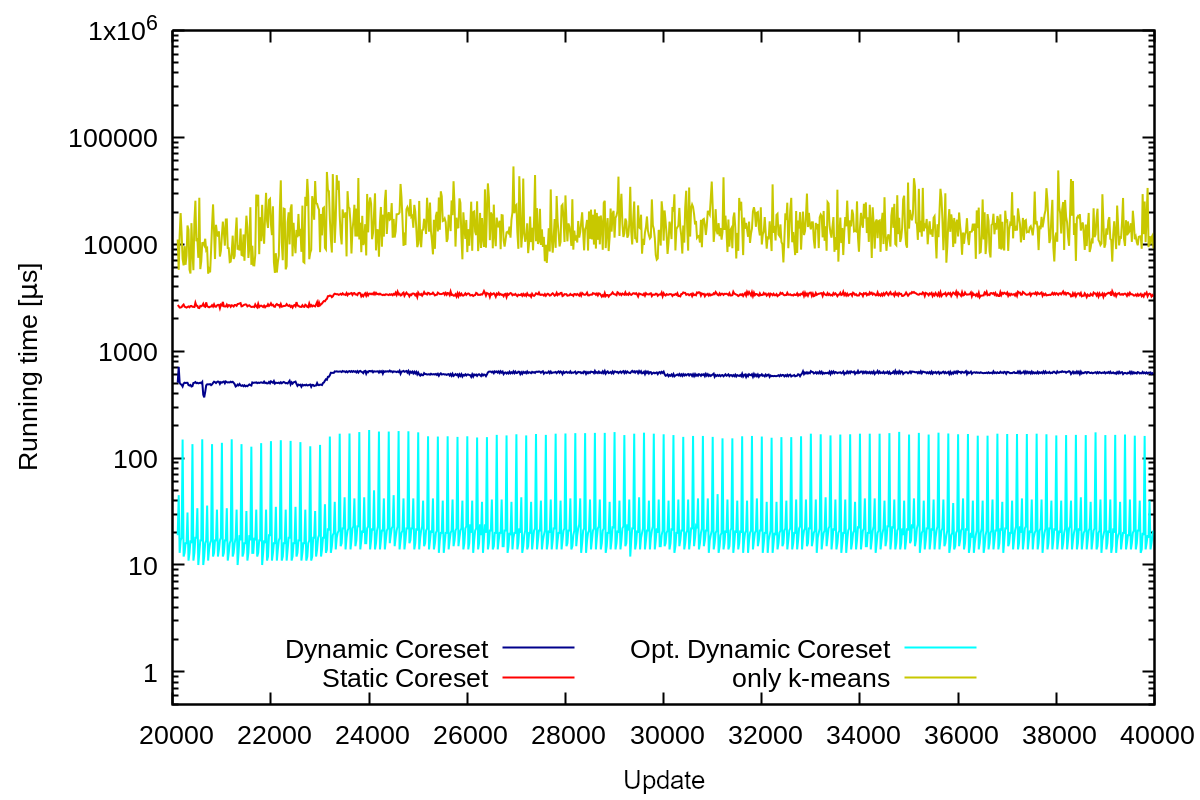}
    \caption{\textbf{Running time comparison of all coreset algorithms using a sliding window.} The static data set Birch with $k = 10$ and $s = 50$ was used. The plot shows the running time of the algorithm after each update to the dataset.}
    \label{fig:ex_timeSlidingWindow}
\end{figure}

\begin{figure}[h]
    \centering
    \begin{subfigure}[b]{0.49\textwidth}
    \includegraphics[width=\textwidth]{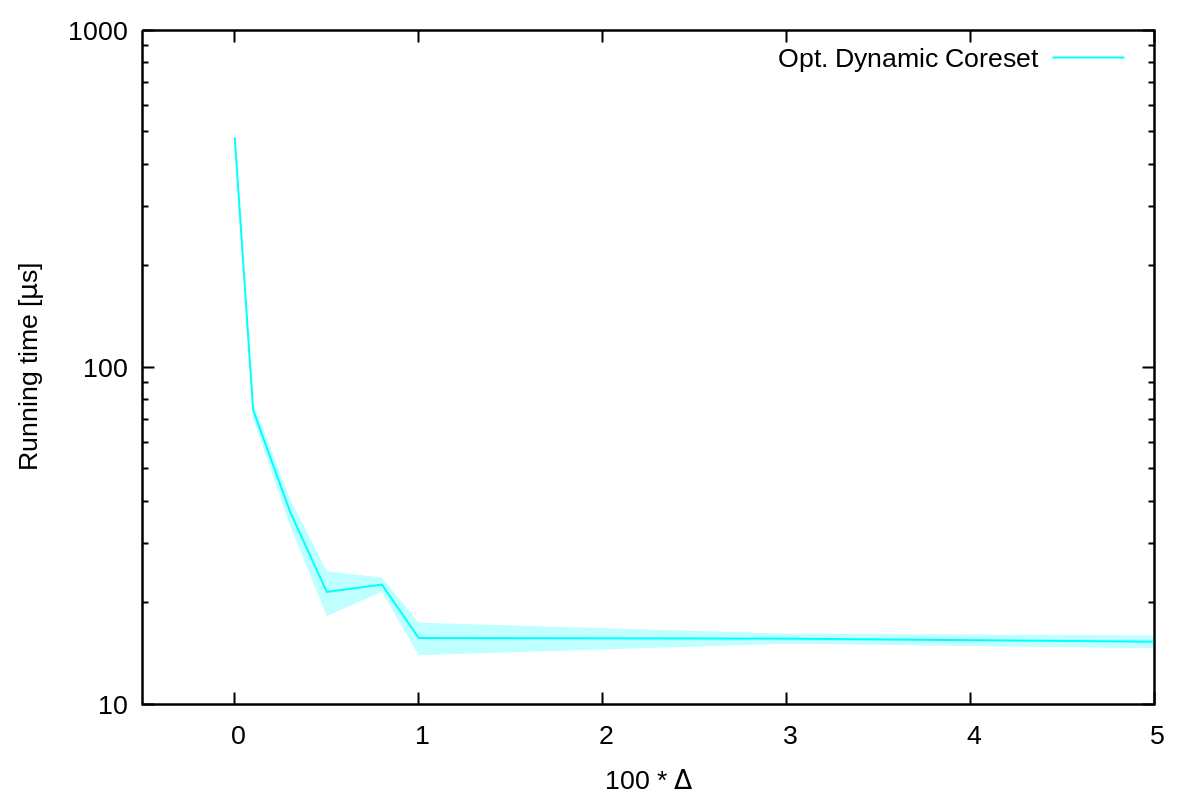}
    \caption{Running time of the sliding Window with ordered deletions}
    \label{fig:ex_d_ordered}
    \end{subfigure}
    \hfill
    \begin{subfigure}[b]{0.49\textwidth}
    \centering
    \includegraphics[width=\textwidth]{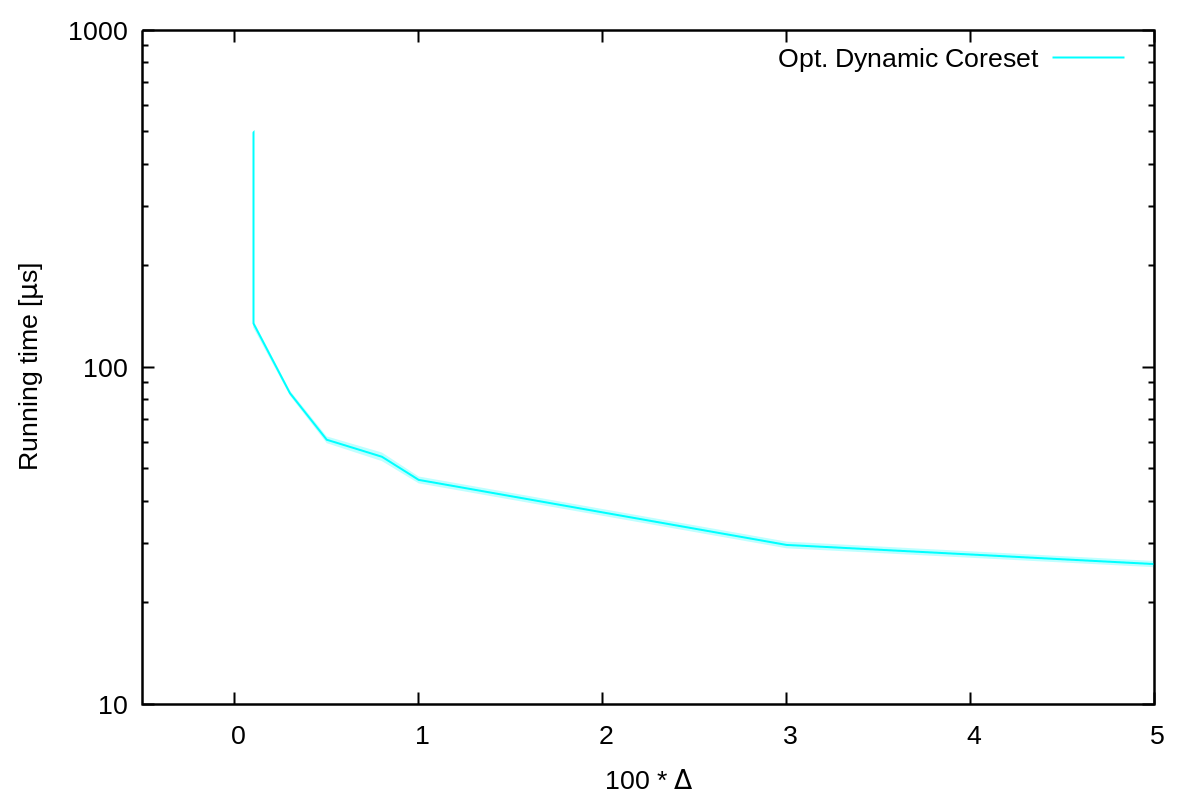}
    \caption{Running time of the sliding Window with random deletions}     \label{fig:ex_d_random}
    \end{subfigure}
    \begin{subfigure}[b]{0.49\textwidth}
    \centering
    \includegraphics[width=\textwidth]{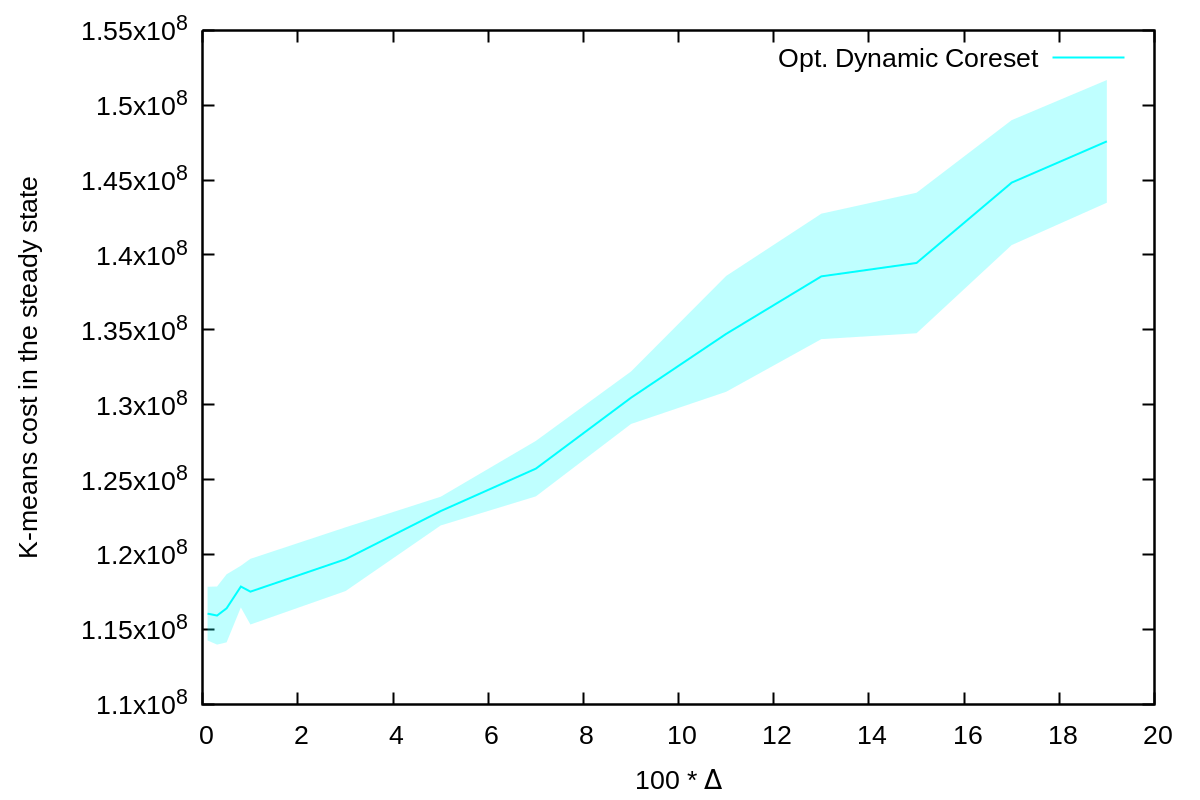}
    \caption{Cost of the \kmeans~solution of the sliding window with ordered deletions}
    \end{subfigure}
    \hfill
    \begin{subfigure}[b]{0.49\textwidth}
    \centering
    \includegraphics[width=\textwidth]{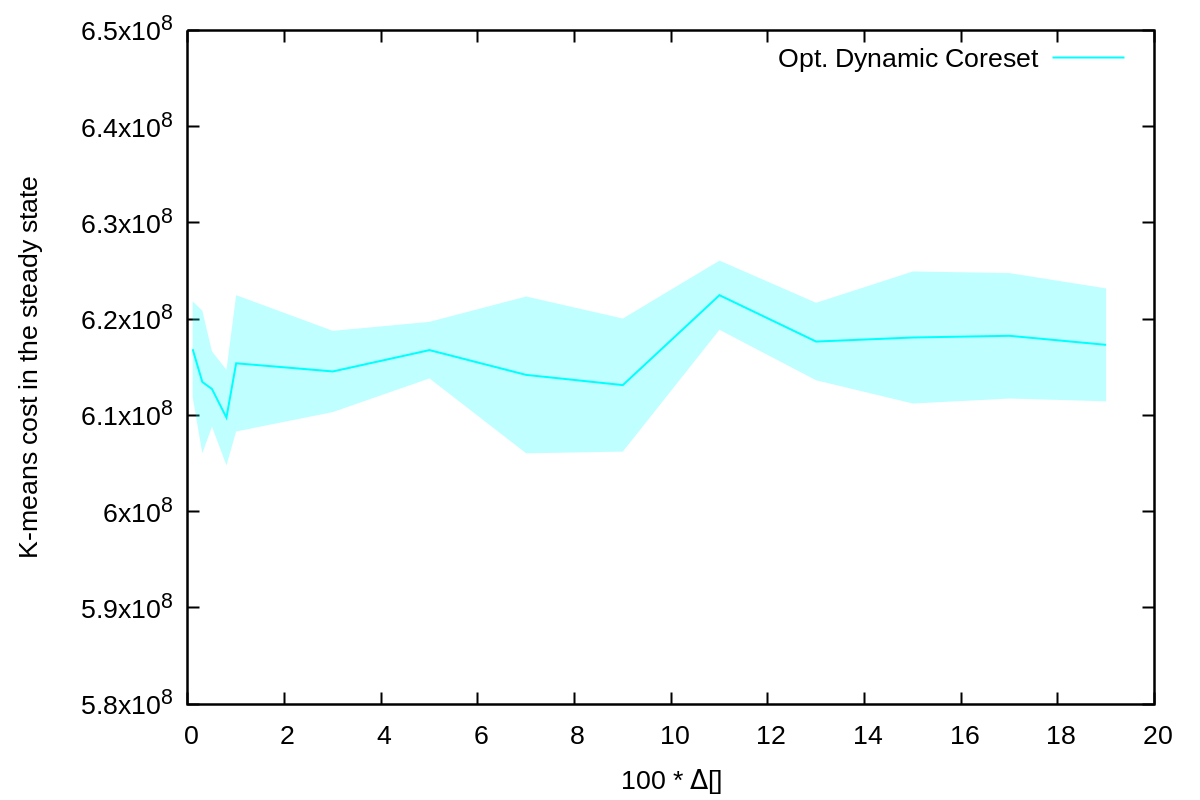}
    \caption{Cost of the \kmeans~solution of the sliding window with random deletions}
    \end{subfigure}
    \caption{\textbf{Impact of $\Delta$ on the remove optimization.} The Birch dataset with a sliding window of size $10000$ with the parameters given in \cref{tab:params} was used. Points were removed in the same order as inserted or in random order. The cost of the optimal \kmpp~solution on the coreset and running time is averaged over all steps in the sliding window.}
    \label{fig:ex_d}
\end{figure}

\FloatBarrier
\subsection{Shallow Tree.}\label{ap:shallow} 

\Cref{fig:ex_d} contains details of experiments to determine the best values for the remove cut-off $\Delta$ while \Cref{fig:ex_shallowTimeG} and \Cref{fig:ex_shallowTimeH} show the impact of the two parameters $g$ and $h$ on the shallow tree. 

The difference between $k=10$ and $k=50$ can be explained by the fact that our algorithm needs to sometimes merge nodes in the tree (to keep each leaf large enough), while the shallow tree does not. This feature could be incorporated in our algorithm as well. However, this is very specific to the sliding window case, where the size of the dataset stays fixed. 

\begin{figure}[h]
    \centering
    \includegraphics[width=\textwidth]{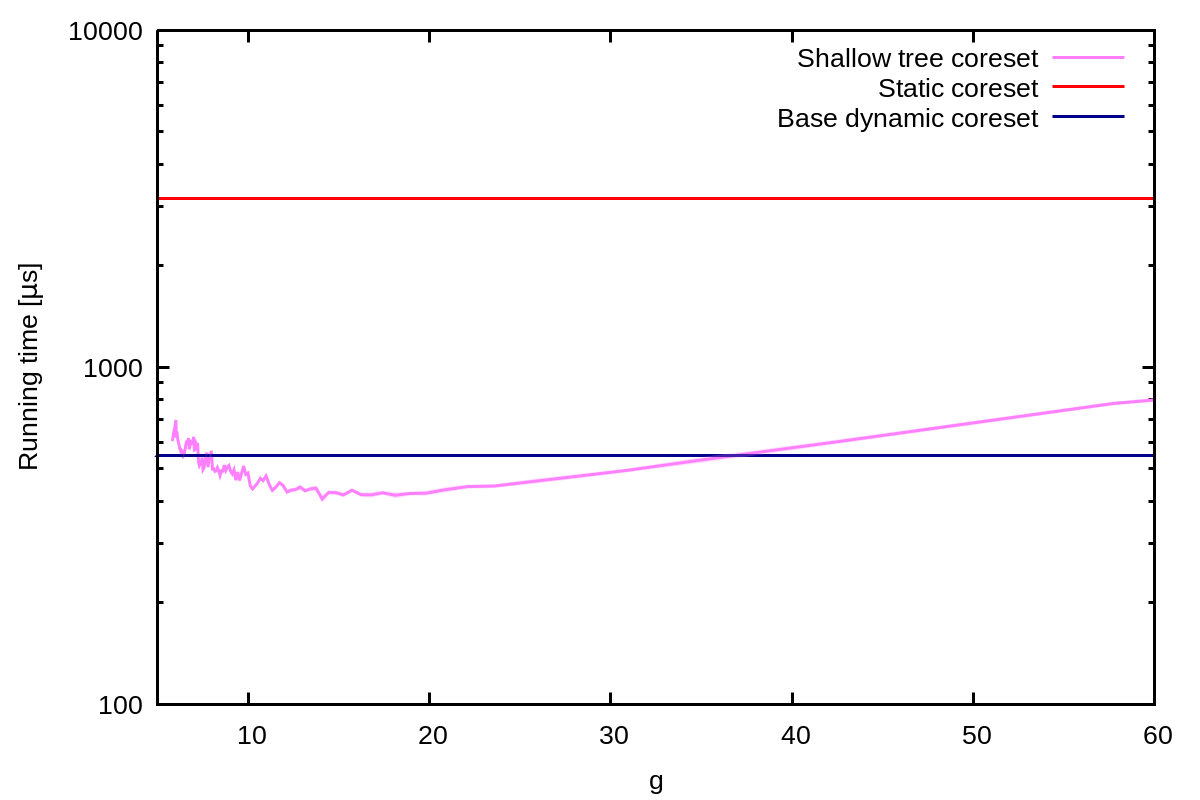}
    \caption{\textbf{Impact of \emph{g}} on the running time of the shallow tree of height one. All measurements were taken in a sliding window of size  $20000$ using parameters given in \cref{tab:params}.  The experiments were repeated five times and the median was used.}
    \label{fig:ex_shallowTimeG}
\end{figure}

\begin{figure}[h]
    \centering
    \begin{subfigure}[b]{0.49\textwidth}
    \includegraphics[width=\textwidth]{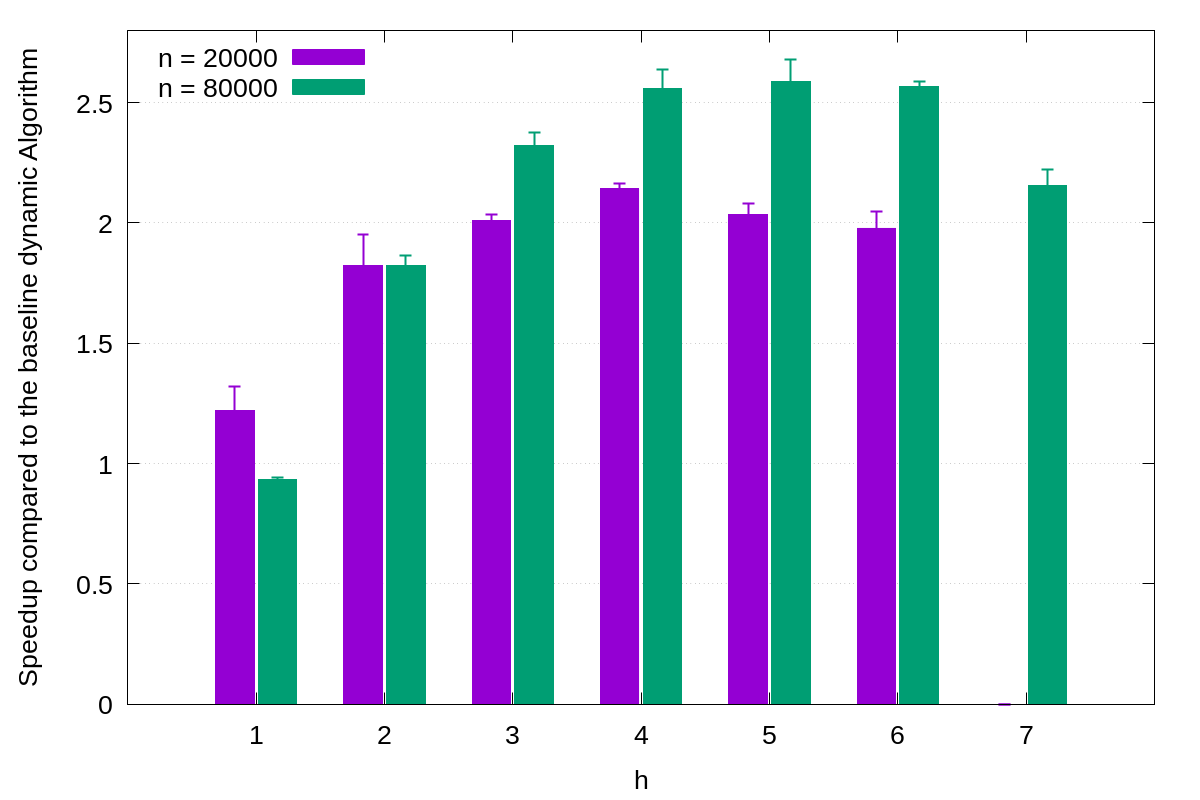}
    \caption{Impact of \emph{h} on the running time of the shallow tree of height one with $k = 10$ and $s = 50$.}
    \end{subfigure} 
    \hfill
    \begin{subfigure}[b]{0.49\textwidth}
    \includegraphics[width=\textwidth]{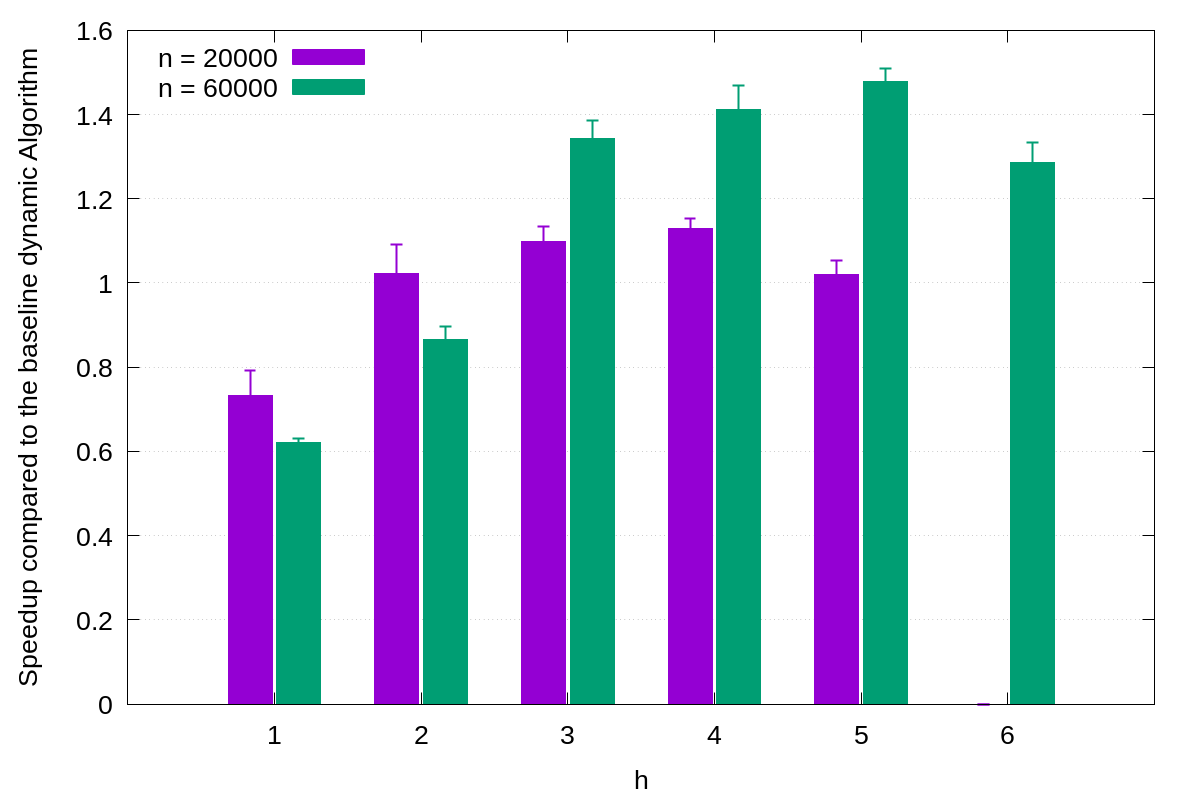}
    \caption{Impact of \emph{h} on the running time of the shallow tree of height one with $k = 50$ and $s = 90$.}
    \end{subfigure} 
    \label{fig:ex_shallowTimeH}
\caption{\textbf{Impact of \emph{h}} on the running time of the shallow tree with optimal $g$. All measurements were taken in a sliding window of different sizes using parameters given in \cref{tab:params}. 
}
\end{figure}

\FloatBarrier
\subsection{ILP-based algorithm.} \label{par:ex_ILP}

We tested the quality of this algorithm on a small dataset containing only insertions (see \Cref{tab:params}), running the ILP from scratch at every iteration. As its running time is by several orders of magnitude larger than any other algorithm, we introduced a time limit of $100.000$ times the running time of the dynamic coreset algorithm. This limit was already reached for very small input sizes of around $300$ points. See \Cref{fig:ex_ILP} for a comparison on the results up to this data size. The solutions found by the ILP have slightly better cost than the dynamic corset algorithm, but only when the ILP fully converges before the time limit. For larger inputs the solver could not produce solutions of a similar quality. 
ILP was therefore not further investigated. Instead, we used the cost achieved as $\kmpp$ algorithm as the benchmark for quality.

\begin{figure}[h]
    \centering
    \begin{subfigure}[b]{0.49\textwidth}
    \includegraphics[width=\textwidth]{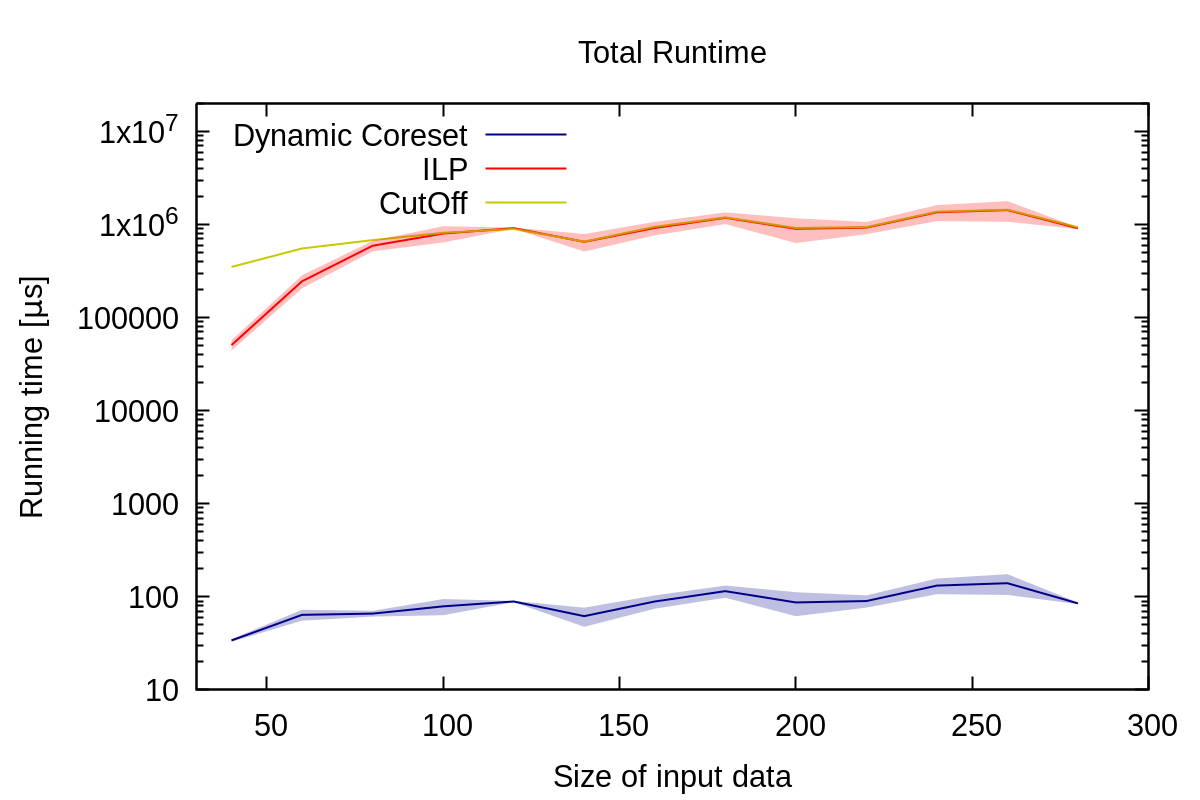}
    \caption{Running time comparison}
    \end{subfigure}
    \hfill
    \begin{subfigure}[b]{0.49\textwidth}
    \centering
    \includegraphics[width=\textwidth]{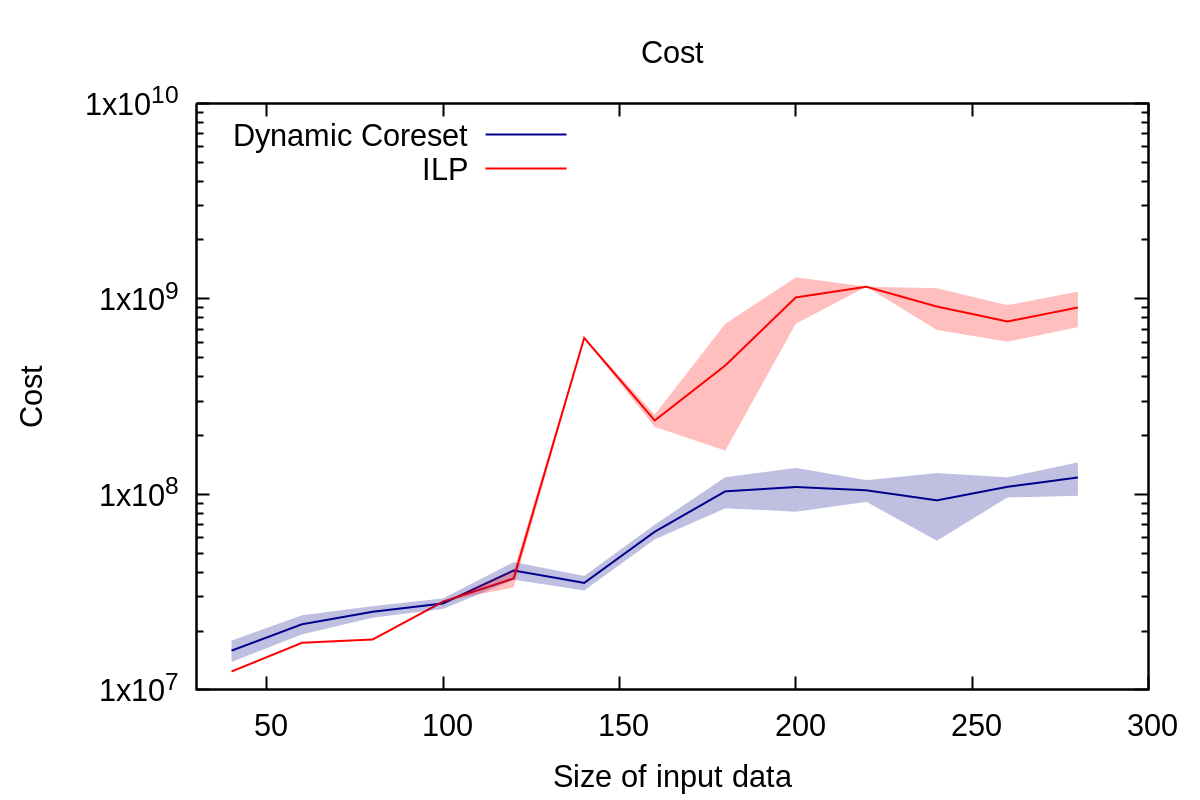}
    \caption{Cost of the \kmeans~solutions on different coresets }
    \end{subfigure}
    \caption{\textbf{Comparison with ILP.} An insertion only data set from the first 300 points of the Birch-snake dataset was used with parameters given in \cref{tab:params}. The cutoff for the running time of the ILP was set to $1e5$ times the running time of the dynamic coreset. Measurements are averages over all update operations. The experiments were repeated three times and the median was used.}
    \label{fig:ex_ILP}
\end{figure}

\end{document}